\documentclass{article}

\usepackage{algorithm,algorithmic,procedure,graphicx,latexsym,amsfonts,amsmath,amssymb,amsmath}
\usepackage{color}
\usepackage[latin1]{inputenc}

\usepackage[english]{babel}

\newcommand{\rcp}{\longleftarrow}
\newtheorem{property}{Property}
\newtheorem{definition}{Definition}
\newtheorem{lemma}{Lemma}
\newtheorem{corollary}{Corollary}
\newtheorem{proposition}{Proposition}
\newtheorem{remark}{Remark}
\newtheorem{theorem}{Theorem}
\newtheorem{example}{Example}

\newenvironment{proof}{\noindent{\bf Proof}}{\hfill$\square$}

\def\restriction#1#2{\mathchoice
              {\setbox1\hbox{${\displaystyle #1}_{\scriptstyle #2}$}
              \restrictionaux{#1}{#2}}
              {\setbox1\hbox{${\textstyle #1}_{\scriptstyle #2}$}
              \restrictionaux{#1}{#2}}
              {\setbox1\hbox{${\scriptstyle #1}_{\scriptscriptstyle #2}$}
              \restrictionaux{#1}{#2}}
              {\setbox1\hbox{${\scriptscriptstyle #1}_{\scriptscriptstyle #2}$}
              \restrictionaux{#1}{#2}}}
\def\restrictionaux#1#2{{#1\,\smash{\vrule height .8\ht1 depth .85\dp1}}_{\,#2}} 

\title{Universal adaptive self-stabilizing traversal scheme: random walk and reloading wave}
\author{Thibault Bernard \and Alain Bui \and Devan Sohier}
\newcommand{\C}{\cal C}
\newcommand{\LC}{\cal LC} 

\begin{document}

\maketitle

\begin{abstract}
In this paper, we investigate random walk based token circulation in dynamic environments subject to failures. We describe hypotheses on the dynamic environment that allow random walks to meet the important property that the token visits any node infinitely often. The randomness of this scheme allows it to work on any topology, and require no adaptation after a topological change, which is a desirable property for applications to dynamic systems. For random walks to be a traversal scheme and to answer the concurrence problem, one needs to guarantee that exactly one token circulates in the system. In the presence of transient failures, configurations with multiple tokens or with no token can occur. The meeting property of random walks solves the cases with multiple tokens. The reloading wave mechanism we propose, together with timeouts, allows to detect and solve cases with no token. This traversal scheme is self-stabilizing, and universal, meaning that it needs no assumption on the system topology. We describe conditions on the dynamicity (with a local detection criterion) under which the algorithm is tolerant to dynamic reconfigurations. We conclude by a study on the time between two visits of the token to a node, which we use to tune the parameters of the reloading wave mechanism according to some system characteristics.
\end{abstract}

\section{Introduction}
Concurrence control is one of the most important requirements in distributed systems and have been investigated for 40 years. 
The emergence of peer-to-peer networks, of wireless mobile networks has renewed the context of the design of protocols used in distributed applications. These networks require a new modeling and new solutions to take into account their intrinsic dynamicity. 

In this paper, we focus on token circulation based solutions: the concurrent access to the shared resource is managed by a \emph{``token''} message that circulates in the distributed system.
We present  a self-stabilizing universal traversal scheme based on a random walk, with a particular focus on dynamic systems.

In distributed computing, a random walk is implemented by a \emph{Token} message that is sent from node to node in a random fashion: each time a node receives a \emph{Token} message, it executes a code that only the token owner is allowed to execute, and then forwards the token to one of its neighbor chosen at random.

Properties of random walks allow to design a traversal scheme using only local information \cite{AKLL+79}: such a scheme is not designed for one particular topology and need no adaptation to fit other ones. Moreover, random walks offer the interesting property to adapt to the insertion or deletion of nodes or links  in the network without modifying any of the functioning rules. With the increasing dynamicity of networks, these features are becoming crucial: redesigning a new browsing scheme at each modification of the topology is impossible, and flooding-based solutions can lead to the congestion of the network. 

An important result of this paradigm is that  the token will eventually visit (with probability 1) all the nodes of a system, even if it is impossible to capture an upper bound on the time required to visit all the nodes of the system.

Random walks based traversal schemes have be used in many theoretical distributed computing problems: mutual exclusion \cite{IsJa90}, spanning tree construction \cite{BaZe89}, or at applicative level: decentralized recommender system \cite{KLMT10} and concurrence management in Grid computing \cite{Ciuf10}. 

The random walk traversal scheme can be affected by different errors. In this paper, we manage them in a \emph{self-stabilizing} fashion, as introduced by Dijkstra in \cite{Dijk74}. After a fault, a self-stabilizing system is led to an arbitrary configuration but eventually recovers a normal behavior and then satisfies the specification of the problem.

\subsection*{Related works}

The token circulation can be affected by only 2 errors:
\begin{itemize}
\item the absence of tokens;
\item the presence of more than one token.
\end{itemize}

Both faults are violations of global properties of the system. However, the second fault may entail (and in our algorthim, will eventually entail) the local property that a node holds several tokens at once. Then this node can remove all of them but one, which leads, when all duplicate tokens are removed, to a correct configuration. The first fault has no locally checkable certificate, so that a global mechanism (meaning a mechanism involving all nodes) has to be put in place. This fault is of \emph{communication deadlock} nature: all nodes are waiting for messages and there are no messages on the communication links. The solution proposed by \cite{GoMu91}  is to use timeout: when a node has not seen the token for a long time, it creates a new one. In \cite{Varg00}, the author proposes a message passing adaptation of Dijkstra algorithm \cite{Dijk74}. In particular, a self-stabilizing token circulation algorithm on an undirected ring is presented. Communication deadlock is solved by a timeout process in a distinguished node called the root. Nevertheless, duplication of tokens may occur. 

To solve this problem, the author introduces the counter flushing paradigm and designs a self-stabilizing token circulation algorithm. The idea of counter flushing is used in numerous  papers dealing with self-stabilization in message passing model, as in \cite{ChWe05,HaVi01}. 
This idea is based on a bound on the time between two successive receptions of the token, which we cannot have with a random walk. 
Starting from a configuration in which there is a single token, eventually a token is created unnecessarily, which violates the specification.

In \cite{DoSW06}, the authors use  random circulating tokens (they call agents) to broadcast information in communication groups. To cope with the situation where no agent exists in the system, the authors use a timer based on the cover time of an agent ($k\times n^3$). They precise as a concluding remark \textit{``The requirements will hold with higher probability if we enlarge the parameter $k$ for ensuring the cover time$[$\ldots$]$''}. In our case the obtention of a single token is a strong requirement, and the use of a parameter $k$ which increases the probability to reach a legitimate configuration cannot be used.

Works have been led on the random walk token circulation paradigm (see \cite{Coop11}), in particular to reduce the average time between two successive visits by the token or to attain a given stationary distribution of the token locations (\cite{IKOY02, NOSY10}).

\subsection*{Contribution} 

Our random walk based solution is self-stabilizing: it tolerates transient failures. If there is no token in the system, upon timeout the missing token is recreated. Our solution is decentralized (no distinguished node) and only the expected time for a random walk based token to cover the system can be captured. Each node is candidate to regenerate the token, and even the choice of an arbitrary timeout period implies that the system could never stabilize: an infinite production of unnecessary tokens can occur. 

No node can ascertain that the token does not exist, due to the way the token moves. However, the longer a node has not seen the token, the more probable it is that no token exists. Thus, upon a timeout, a node should create a new token. To avoid this creation in cases when a token already exists, tokens periodically inform nodes of their existence, which inhibits tokens creation. We call this process a \emph{reloading wave}. Each node that has been previously visited by this token receives the reloading wave, resets its timer  and is thus forbidden to create a duplicated token for the next period. The only case when a node creates a new token after the timeout period, corresponds to  a situation in which the node has never been visited by the token.

The reloading wave information should be broadcast efficiently and reliably through the network. The reloading wave is defined in connection with the token. We only use the information collected and stored by the token through its traversal. Thus there is no additional protocol.  In such a token, called a circulating word,  a dynamic self-stabilizing tree is maintained, through which the information is broadcast.  

The reloading wave propagation is periodical. The tree used to broadcast the wave is adaptive: it evolves with the moves of the token. Thus, two propagations of the reloading wave will likely use different propagation trees.



\subsection*{Outline}
In section 2, we present  our model of distributed system and some preliminary notions about random walks and self-stabilizing systems. In section 3, we propose a token circulation scheme in a dynamic environment. We prove that this scheme guarantees the specification of token circulation as long as topology changes are independent of the token moves. In section 4 we introduce the reloading wave mechanism to design a self-stabilizing version of the previous algorithm. This new mechanism is proved to work in a static environment. The case of a dynamic environment is discussed in section 5. A criterion on the mobility pattern to make the algorithm robust against topological reconfigurations is determined. In the last section, we propose to optimize a parameter (timeout) of our algorithm to accelerate the convergence of our algorithm to a configuration where the specification of the problem is satisfied.

\section{Model and Preliminaries}

\subsection{Distributed systems} \label{MDS}
We consider a distributed system as an undirected connected graph $G =(V,E)$, where $V$ is a set of nodes with $|V| = n$ and $E$ is the set of bidirectional communication links with $|E|= m$. A node is composed of a computing unit and a message queue. A communication link $(i,j)$ exists if and only if $i$ and $j$ are neighbors. Every node $i$  maintains a set of its neighbors ids (denoted by $N_i$). The degree of $i$ is the number of neighbors of $i$, i.e. $|N_i|$  (denoted by $\deg(i)$). We consider a distributed system in which all nodes have distinct identities. We assume an upper bound $\mathcal{N}$ on the number of nodes in the network, an upper bound on the delay to deliver a message and an upper bound on the processing time for each node. The sum of these two bounds corresponds to the time for receiving and treating a message. In the sequel, we take this as time unit. Moreover, we assume reliable channels during and after the stabilization phase.

\subsection{Model}

A configuration of  the system is an instance of the nodes states and a multi-set of messages in transit in the links. $Token_\gamma $ is the set of all token messages in transit in the network at the configuration $\gamma$, and $Token_{\gamma}(i)$ is the set of the Token messages heading toward node $i$ at configuration $\gamma$.
A computation $e$ of the system is a sequence of configurations $\gamma_1, \gamma_2, \dots , \gamma_k, \ldots$ such that the configuration $\gamma_{k+1}$ is reached from $\gamma_k$ denoted by ($\gamma_{k} \rightarrow \gamma_{k+1}$) by a single step, a step being an atomic process of one message in the system.  
A configuration $\delta$ is said reachable from $\gamma$ and denoted by $\gamma \rightarrow ^* \delta$ if  there exists a sequence such that
$\gamma = \gamma_0 \rightarrow  \gamma_1  \rightarrow \ldots \gamma_{k-1}  \rightarrow \gamma_k = \delta$. 
Let $\C$ be the set of possible configurations of the system and ${\cal E}$ be the set of all possible computations of the system. The set of computations starting with the  configuration $\gamma$  is 
denoted by ${\cal E}_\gamma$. The set of computations of ${\cal E}$ whose initial configurations are all elements of $A \subset \C$ is denoted by ${\cal E}_A=\bigcup_{\gamma\in A}\mathcal E_\gamma$.

The only nodes variables in our algorithm is a timeout.

\begin{remark}
Since the algorithm we design is random, it would be more accurate to describe a computation as a random process $\mathcal E_\gamma(\omega)=(\gamma_1(\omega), \gamma_2(\omega), \ldots)$, with $\gamma_i:\omega\in\Omega\rightarrow\gamma_i(\omega)$ random variables. Then, the random choice of a neighbor to which the token is sent would make a random walk of the sequence of vertices to which a given token is sent, which is enough to establish the properties required to prove the algorithm. Thus, to avoid overly unwieldy notations, we skip the $\omega$ in the sequel and explicitly use the relevant properties of random walks when required.
\end{remark}

\subsection{Failures and self-stabilization} 

A transient fault is a fault that causes the state of a process (its local state, program counter, and variables) and of a channel (arbitrary messages may be removed and added) to change arbitrarily without further affecting the behavior of the algorithm. An algorithm is called self-stabilizing if it is resilient to transient failures in the sense that, when initiated in an arbitrary system configuration, and no other transient faults occur, the algorithm converges to a legitimate configuration after which it performs its task correctly (see \cite{Dijk74,Dole00}). Thus, a self-stabilizing system experiencing any transient failure, eventually recovers its normal behavior.

As we work with random walks, we cannot ascertain the time at which a property will be true, but we can know with high probability that it will be. \emph{``With high probability''} (in the sequel, \emph{``whp''}) means that the probability that this event never occurs is zero, in the sense that nothing forbids that this event does not occur (one can find an infinite execution without the occurrence of this event), but as times goes by, it is less and less likely that the event has not occurred. Thus, most of the properties we will prove are \emph{whp}, and the convergence times will be expected times (no deterministic bound can be provided).

$\mathcal{C}$ being the set of all configurations of the system, an algorithm is self-stabilizing if there is a set of legitimate configurations $\mathcal{LC}$ such as: \begin{enumerate}
\item the system eventually reaches a legitimate configuration (\emph{convergence} property);
\item starting from any legitimate configuration, the system remains in $\mathcal{LC}$ (\emph{closure} property);
\item starting from any legitimate configuration, the execution of the algorithm verifies the specifications of the problem (\emph{correctness} property).
\end{enumerate}

More formally, in this paper we use the notion of \emph{attractor} to define the self-stabilization concept.

\begin{definition}[Attractor] \label{attractor}
Let $B\subset \C$ and $A\subset B$, $A$ is an attractor of $B$ if and only if: 
\begin{itemize}
\item \textbf{convergence} $\forall(\gamma_1, \gamma_2,\ldots)\in{\cal E}_B,\exists i\geq 1, \gamma_i\in A$
\item \textbf{closure} $\forall(\gamma_1, \gamma_2,\ldots)\in{\cal E}, \gamma_1\in A \Rightarrow \forall i, \gamma_i \in A$
\end{itemize}
\end{definition}

\begin{definition}[Probabilistic attractor] \label{probattractor}
Let $B\subset \C$ and $A\subset B$, $A$ is a probabilistic attractor of $B$ if and only if: 
\begin{itemize}
\item \textbf{convergence} $\forall(\gamma_1, \gamma_2,\ldots)\in{\cal E}_B, \exists i \geq 1, \gamma_i \in A$ whp.
\item \textbf{closure} $\forall(\gamma_1, \gamma_2,\ldots)\in{\cal E}, \gamma_1\in A \Rightarrow \forall i, \gamma_i \in A$.
\end{itemize}
\end{definition}

This means that starting from any configuration in $B$, the system eventually reaches a configuration in $A$ \emph{whp}: an execution can be built in which the system never reaches $A$, but such an execution requires a sequence of decisions that are less and less likely as time goes by. For instance, it can be imagined that one never wins at head or tail, but the longer one plays, the less probable it is. Once the system has reached a configuration in $A$, it (deterministically) remains in $A$.

\begin{definition}[Specification]
A specification is a predicate on a computation.
\end{definition}

\begin{definition}[Self-stabilization]
A  system is self-stabilizing if and only if there exists a non-empty set $\LC \subset \C$ such that
\begin{itemize}
\item $\LC$ is an attractor for $\C$.
\item Every $e$ in ${\LC}$ meets the problem specification.
\end{itemize}
\end{definition}
	
\begin{definition}[Probabilistic self-stabilization]
A  system is probabilistically self-stabilizing if and only if there exists a non-empty set $\LC \subset \C$ such that
\begin{itemize}
\item $\LC$ is a probabilistic attractor for $\C$.
\item Every $e$ in ${\LC}$ meets the problem specification.
\end{itemize}
\end{definition}

Thus, a probabilistically self-stabilizing algorithm is such that the longer one waits, the less likely the algorithm does not meet the specification. This probability can be bounded by a quantity that tends to 0.


\subsection{Random walks properties}

A random walk is a sequence of vertices visited by a token that starts at $i$ and visits other vertices according to the following transition rule: if the token is owned by $i$ at time $t$ then at time $t+1,$ it will be owned by one of its neighbors, this neighbor being chosen uniformly at random among all of them  \cite{Lova93,AKLL+79}. 

\begin{algorithm}[H]
\begin{algorithmic}
\STATE {\bf R0: Upon reception of a message ($Token$)}
\STATE Choose $i$ uniformly at random in $N_i$
\STATE Send $Token$ to $i$
\end{algorithmic}
\caption{Random walk circulation algorithm on  site $i$}
\label{RWAlg}
\end{algorithm}

To compute the complexity of a random walk based distributed algorithm, we use three main quantities:\begin{itemize}
\item The \emph{hitting time} is the average time to reach a node $j$ starting from a node $i$, and is denoted by $h_{ij}$. It is defined as the conditional expectation of the random number of transitions before entering $j$ for the first time knowing that the token starts from $i$.  It has been proven in \cite{Lova93}  that $h_{ij}$ is bounded by $\frac4{27}n^3$. In \cite{IKOY02, NOSY10}, authors provide a local mechanism to reduce this value to $n^2$.
\item The \emph{cover time} is the expected time for a random walk starting at $i$ to visit all the nodes of the system and is denoted by $C_i$. So, the cover time of a graph is $C=\max\{C_i/i\in V\}$ and it was proven in \cite{Feig95,Feig95a} that, depending on the topology of $G$, $O (n\ln n) \leq C \leq  O(n^3)$.
\item Finally, the \emph{meeting time} is the expected time for several random walks to meet on an arbitrary node and is denoted by $M$. The meeting time is bounded by $O(n^3)$ \cite{TeWi91}.
\end{itemize}

\subsection{Problem Specifications}

The specification we are willing to meet is the following, to ensure a consistent token circulation:\begin{itemize}
\item at each step, exactly one \emph{Token} message circulates in the system;
\item any node will receive the token message infinitely often \emph{whp}.
\end{itemize}

Since we suppose the treatment of messages is atomic, and take the successive configurations of the system when the considered node has finished with its local treatment, we need not consider the case when the token is being treated.

We note $Token_\gamma$ the set of token messages at configuration $\gamma$, and $Token_\gamma(i)$ the set of token messages heading to a node $i$

\begin{definition}[Problem Specification]\label{PS}
We say a computation $\mathcal E=(\gamma_1, \gamma_2, \ldots)$ satisfies specification $ProbTokCirc$ of Probabilistic Token Circulation if:
\begin{itemize}
\item $\forall k, |Token_{\gamma_k}|=1$ (there exists exactly one token in the system);
\item $\forall k, \forall i, \exists l>k, |Token_{\gamma_{l}}(i)|=1$ \emph{whp} (any node will receive the token infinitely often).
\end{itemize}

\end{definition}


Our contribution is to design a solution that eventually satisfy these specifications in a dynamic and faulty environment.

\section{Dealing with topology changes} \label{TC}

We show in this section that a random walk on a dynamic graph has the same properties that a random walk on a static graph. A random walk is well adapted to dynamically evolving graphs. Indeed, its traversal is based only on local information, and it has not to be redesigned after a topological change. We prove in this section that if the node mobility is independent from the token moves (in particular, if no daemon picking the random moves of the token is behaving as an adversary), desirable properties hold:\begin{itemize}
\item any node is visited in finite time;
\item we can compute the average time it takes to hit a given node, or to visit all nodes.
\end{itemize}

Consider a dynamic graph on a static set of nodes, with dynamic edges $G_t=(V, E_t)$, with $t$ a continuous time index. We model the disconnection of a node by all its link being removed. We suppose in the sequel that:\begin{itemize}
\item the evolution of the graph is an homogeneous Markov process (\emph{ie} the evolution of the system topology only depends on its current state);
\item it is independent from the choices of the random walk (this avoids cases with the system behaving as an opponent to the walk).
\end{itemize}

The homogeneity assumption means that the token evolves much faster than the system. Clearly, in most concrete applications, the system evolution is driven by some daily cycle. If the evolution is weak at a time scale of below one minute, and that the hitting time is itself below one minute, then, this assumption is realistic in the following computations.

Thus, if the system is considered at each reception of the token, the evolution of the graph is discretized. In the sequel, we consider the discretization $G_\gamma$, which is a Markov chain by independence of the token movements and of the graph evolution.

Given the graphs $G$ and $G'$, we note $p_{G\rightarrow G'}$ the probability that at a step the dynamic graph is $G$ and at the next step, it is $G'$. A step corresponds to a time unit (cf. Section \ref{MDS}). 
If the graph evolves as a markovian process, then this discretization is a Markov chain.

We note $p_{ij}(G)$ the probability that, in $G$, node $i$ sends the token to $j$. $h_{ij}(G)$ is the average time it takes to the walk, starting on $i$, to reach $j$, knowing that at the beginning of the walk, the system is in the state described by $G$. Finally, the system being described as an homogeneous Markov chain with a non-bipartite finite state space, it has a stationary distribution we note $\pi$. The state space is non-bipartite, since the opposite would mean that edges blink at the exact same pace as the token moves. A stationary distribution means that if we look at the system at a certain time, then with probability $\pi(G)$, its topology is $G$.

Note $\overline{p_{ij}}=\sum_G\pi(G)p_{ij}(G)$ the average probability that the token being on $i$, it is sent to $j$.

\begin{theorem}\ref{hittingdyn}
The hitting time of a random walk on a dynamic graph is such that $h_{ij}=1+\sum_k\overline{p_{ik}}h_{kj}$, and $h_{ii}=0$.
\end{theorem}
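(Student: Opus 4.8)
The plan is to reduce the dynamic-graph hitting time to a first-step analysis on an auxiliary \emph{static} chain whose transition matrix is the $\pi$-averaged matrix $\overline{P}=(\overline{p_{ik}})$. First I would make the state space explicit: the genuinely Markovian object here is not the token position alone but the pair $(i,G)$, the token's current node together with the current topology. By the independence assumption the one-step transition of this product chain factorizes as $\Pr[(i,G)\to(k,G')]=p_{ik}(G)\,p_{G\to G'}$. Writing $h_{ij}(G)$ for the expected number of steps to reach node $j$ starting from state $(i,G)$, first-step analysis on the product chain gives, for $i\neq j$, $h_{ij}(G)=1+\sum_{k}p_{ik}(G)\sum_{G'}p_{G\to G'}\,h_{kj}(G')$, together with $h_{jj}(G)=0$ for every $G$; the term $k=j$ contributes $0$ and may be kept for convenience.

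Next I would define the quantity named in the statement, $h_{ij}:=\sum_G\pi(G)\,h_{ij}(G)$, i.e. the hitting time when the topology at the start of the walk is drawn from the stationary law $\pi$. Averaging the recurrence against $\pi$ and exchanging the finite sums yields $h_{ij}=1+\sum_k\sum_{G'}\big(\sum_G\pi(G)p_{ik}(G)p_{G\to G'}\big)h_{kj}(G')$. The goal is to collapse the inner bracket to $\overline{p_{ik}}\,\pi(G')$, which would reproduce exactly $h_{ij}=1+\sum_k\overline{p_{ik}}\,h_{kj}$; the boundary case $h_{ii}=0$ is then immediate from $h_{jj}(G)\equiv 0$.

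The crux, and the main obstacle, is precisely this collapse. Stationarity of $\pi$ gives the correct \emph{marginal}: summing the bracket over $G'$ returns $\sum_G\pi(G)p_{ik}(G)=\overline{p_{ik}}$, and since $\pi$ is invariant the topology stays $\pi$-distributed at every step. What it does not grant for free is independence of the one-step topology transition from the token's move: the factors $p_{ik}(G)$ and $p_{G\to G'}$ are coupled through the common state $G$, so in general $\sum_G\pi(G)p_{ik}(G)p_{G\to G'}$ is a size-biased law over $G'$ rather than $\overline{p_{ik}}\,\pi(G')$. This is exactly where the modeling hypotheses must be invoked: under the independence of the graph evolution from the walk's choices together with the homogeneity (``the token evolves much faster than the system'') assumption, the topology seen at successive receptions may be treated as re-drawn from $\pi$ independently of the current move, so that the visited-node sequence is itself a homogeneous Markov chain on $V$ with transition matrix $\overline{P}=(\overline{p_{ik}})$. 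I would therefore organize the argument so that these hypotheses are used precisely to certify that the node-marginal is Markov with matrix $\overline{P}$; once that is granted, the theorem reduces to the standard hitting-time recurrence for the finite chain $\overline{P}$, recovered by conditioning on the first move and using $h_{jj}=0$. I expect the write-up to spend most of its effort isolating and justifying this Markovian reduction, the residual first-step computation being routine.
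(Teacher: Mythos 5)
You take essentially the same route as the paper: write the one-step recurrence $h_{ij}(G)=1+\sum_k p_{ik}(G)\sum_{G'}p_{G\to G'}h_{kj}(G')$, define $h_{ij}=\sum_G\pi(G)h_{ij}(G)$, average against $\pi$, and collapse the cross-term $\sum_G\pi(G)p_{ik}(G)p_{G\to G'}$ into $\overline{p_{ik}}\,\pi(G')$ so as to land on the static-chain recurrence for $\overline{P}=(\overline{p_{ik}})$. Where you genuinely differ is at that collapse, and your diagnosis is the more accurate one. The paper justifies it by asserting that independence of the walk's choices from the graph's evolution makes the expectation of the product equal the product of expectations, $\sum_G\pi(G)p_{G\to G'}p_{ik}(G)=\left(\sum_G\pi(G)p_{G\to G'}\right)\left(\sum_G\pi(G)p_{ik}(G)\right)$, and then applies stationarity. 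But, as you point out, $p_{ik}(G)$ and $p_{G\to G'}$ are both functions of the same random variable $G$; the stated independence is only conditional independence given $G$, which does not decorrelate these functions under $\pi$ --- generically the cross-term is a size-biased law over $G'$, not $\overline{p_{ik}}\,\pi(G')$. So the paper's proof of its pivotal equality is unjustified as written, and your repair --- isolating an explicit decoupling hypothesis (topology effectively re-drawn from $\pi$ independently of the current move, your reading of the paper's homogeneity remark) under which the visited-node sequence is itself a homogeneous Markov chain with matrix $\overline{P}$, then concluding by routine first-step analysis with $h_{jj}(G)\equiv0$ giving $h_{ii}=0$ --- is the honest way to obtain the theorem. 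The trade-off: the paper's argument is self-contained but its central step does not follow from its two stated assumptions (it does hold in special cases, e.g.\ when $p_{ik}(\cdot)$ is constant on the support of $\pi$, or when graph transitions are uncorrelated with local degrees); your version is sound, at the price of making visible that the theorem rests on a strictly stronger modeling hypothesis.
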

\begin{proof}
We state that:
$$h_{ij}(G)=\sum_k p_{ik}(G)\left(1+\sum_{G'} p_{G\rightarrow G'}h_{kj}(G')\right)$$

This means that the token being in $i$ and the system being described by $G$, with probability $p_{ik}(G)$, the token is sent to $k$. When sent to $k$, it takes one step and then, the token has to go from $k$ to $j$, the system topology having evolved to $G'$ (with probability $p_{G\rightarrow G'}$) in the meantime. The hitting time from $i$ is one step to send the token to one neighbor, plus the expectation over the chosen neighbor of the average time it takes to the token to go from it to $j$. This hitting time from $k$ to $j$ is the average hitting time over the possible system states $G'$.

We are interested in the hitting time from node $i$ to node $j$. If we have no information on the system state at the beginning of the process, we take the average hitting time over all possible system states: $h_{ij}=\sum_G \pi(G)h_{ij}(G)$.

\begin{equation*}\begin{split}
h_{ij}&=\sum_G \pi(G)h_{ij}(G)\text{ by definition} \\
&=\sum_G\pi(G)\sum_k p_{ik}(G)\left(1+\sum_{G'}p_{G\rightarrow G'}h_{kj}(G')\right)\text{ according to the previous equation}\\
&=\sum_G\sum_k\pi(G)p_{ik}(G)+\sum_G\pi(G)\sum_k p_{ik}(G)\sum_{G'}p_{G\rightarrow G'}h_{kj}(G')\\
&=\sum_k\sum_G\pi(G)p_{ik}(G)+\sum_k\sum_{G'} \sum_G\pi(G)p_{G\rightarrow G'}p_{ik}(G)h_{kj}(G')\\
&=\sum_k\overline{p_{ik}}+\sum_k\sum_{G'} \sum_G\pi(G)p_{G\rightarrow G'}p_{ik}(G)h_{kj}(G')\text{ by definition of $\overline{p_{ik}}$}\\
&=\sum_k\left(\overline{p_{ik}}+\sum_{G'} \sum_G\pi(G)p_{G\rightarrow G'}p_{ik}(G)h_{kj}(G')\right)\\
&=\sum_k\left(\overline{p_{ik}}+\sum_{G'}h_{kj}(G')\sum_G\pi(G)p_{G\rightarrow G'}p_{ik}(G)\right)\\
&=\sum_k\left(\overline{p_{ik}}+\sum_{G'}h_{kj}(G')\left(\sum_G\pi(G)p_{G\rightarrow G'}\right)\left(\sum_G\pi(G)p_{ik}(G)\right)\right)
\end{split}
\end{equation*}

The last equality comes from the independence of the system evolution and the token moves. Indeed, $\sum_G\pi(G)p_{G\rightarrow G'}p_{ik}(G)$ is the expectation over the system states of the probability that the system evolves to a given state $G'$ times the probability that the token moves to $k$. Since these quantity are independent, the expectations of their product is the product of their expectations, $\left(\sum_G\pi(G)p_{G\rightarrow G'}\right)\left(\sum_G\pi(G)p_{ik}(G)\right)$.

Thus, since $\sum_G\pi(G)p_{G\rightarrow G'}=\pi(G')$ (by definition of a stationary distribution) and $\sum_G\pi(G)p_{ik}(G)=\overline{p_{ik}}$ (by definition):
\begin{equation*}\begin{split}
h_{ij}&=\sum_k\left(\overline{p_{ik}}+\sum_{G'}h_{kj}(G')\left(\sum_G\pi(G)p_{G\rightarrow G'}\right)\left(\sum_G\pi(G)p_{ik}(G)\right)\right)\\
&=\sum_k\left(\overline{p_{ik}}+\sum_{G'}h_{kj}(G')\pi(G')\overline{p_{ik}}\right)\text{ by definition of $\overline{p_{ik}}$}\\
&=\sum_k\overline{p_{ik}}\left(1+\sum_{G'}h_{kj}(G')\pi(G')\right)\\
&=\sum_k\overline{p_{ik}}(1+h_{kj})\\
\end{split}
\end{equation*}

Now, $\overline{p_{ij}}$ is a transition probability: 
\begin{equation*}\begin{split}
\sum_j\overline{p_{ij}}&=\sum_j\sum_G\pi(G)p_{ij}(G)\\
&=\sum_G\sum_j\pi(G)p_{ij}(G)\\
&=\sum_G\pi(G)\sum_jp_{ij}(G)\\
&=\sum_G\pi(G)=1
\end{split}\end{equation*}

Finally, $h_{ij}=1+\sum_k\overline{p_{ik}}h_{kj}$, and $h_{ii}=0$, which is the very equation followed by the hitting time of a random walk on a weighted graph $\overline G=(E, E\times E, \omega)$, with $\omega(i, j)=\overline{p_{ij}}$.
\end{proof}

\begin{corollary}\label{RWProp}
Random walks on dynamic graphs verify the hitting, cover, and (if the graph is not bipartite) meeting properties.
\end{corollary}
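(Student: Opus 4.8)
The plan is to reduce everything to a single observation already latent in Theorem \ref{hittingdyn}: under the two standing hypotheses (the graph evolves as a homogeneous Markov process, independently of the token's random choices), the sequence of nodes successively holding the token is \emph{itself} a time-homogeneous Markov chain on $V$. Indeed, if the graph is started in its stationary distribution $\pi$, then independence together with the invariance $\sum_G\pi(G)p_{G\rightarrow G'}=\pi(G')$ guarantees that at every step the graph is again $\pi$-distributed and independent of the token's past; hence the one-step transition of the token position is, regardless of history, $\overline{p_{ik}}=\sum_G\pi(G)p_{ik}(G)$. In other words the token performs an ordinary random walk on the static weighted graph $\overline G$ with edge weights $\omega(i,j)=\overline{p_{ij}}$, and the hitting-time recurrence of Theorem \ref{hittingdyn} is exactly the first-step analysis of that static walk.

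Once this reduction is in place, each of the three properties follows from the corresponding classical result for static walks. For the hitting property, I would note that $\overline P=(\overline{p_{ij}})$ is a stochastic matrix (established in Theorem \ref{hittingdyn}) and that the standing connectivity assumption makes it irreducible; the linear system $h_{ij}=1+\sum_k\overline{p_{ik}}h_{kj}$, $h_{ii}=0$, then has a unique finite solution, so every node is reached in finite expected time and the hitting time is computable. For the cover property, finiteness of all the $h_{ij}$ on a finite vertex set immediately yields a finite cover time, e.g.\ through a Matthews-type bound $C\le \max_{i\neq j}h_{ij}\,H_{n-1}$, consistent with the static estimates $O(n\ln n)\le C\le O(n^3)$ recalled earlier. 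For the meeting property, I would pass to the product walk on $V\times V$ governed by $\overline P$: when $\overline G$ is non-bipartite, $\overline P$ is aperiodic, the product chain is irreducible, and the diagonal $\{(i,i)\}$ is therefore hit in finite expected time, giving the $O(n^3)$ meeting bound of \cite{TeWi91}.

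The main obstacle lies in the meeting case, for two reasons. First, the non-bipartiteness hypothesis is precisely what is required and not a mere technicality: on a bipartite $\overline G$ the walk has period two, the joint parity of the two token positions is preserved, and two walks started with opposite parities can never coincide, so this hypothesis is exactly the condition ruling out the parity obstruction. Second, and more delicate, two tokens living on the same dynamic graph both use the \emph{same} graph realization at each step, so conditionally on the graph trajectory they are coupled rather than independent; I therefore have to check that the averaging argument of Theorem \ref{hittingdyn}, which tracked only a single token, still goes through when two positions are followed simultaneously. Verifying that this common-graph coupling does not destroy irreducibility of the averaged product chain, equivalently that the pair of positions still behaves like a walk driven by $\overline P\otimes\overline P$ on the relevant communicating component, is the step that demands the most care; irreducibility and aperiodicity of $\overline P$, inherited from the connectivity and non-bipartiteness assumptions, are what ultimately close the argument.
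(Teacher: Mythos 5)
Your reduction contains a genuine gap: the claim that, at stationarity, the sequence of token positions is \emph{itself} a time-homogeneous Markov chain with transition matrix $\overline P=(\overline{p_{ij}})$ is false under the paper's hypotheses. The standing assumption is that the graph evolution is independent of the walk's random \emph{choices}, not of the walk's \emph{positions}: the token's past positions are functions of the past graphs (an edge that was traversed must have existed), and since the graph process is Markovian with temporal correlations, the past graphs carry information about the current one. Concretely, if the topology alternates slowly between two states, observing which neighbor the token just used reveals the current topology with high probability, so the conditional law of the next move given the token's history is not $\overline{p_{i\cdot}}$; your argument would only be valid if the graph were resampled independently from $\pi$ at every step. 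What Theorem \ref{hittingdyn} actually establishes is weaker and purely about expectations: the numbers $h_{ij}$ satisfy the same linear recurrence as the hitting times of a walk on the weighted graph $\overline G$. That is enough for the hitting property (finiteness of the $h_{ij}$) and then for the cover property, which is exactly the paper's own two-line proof --- so on hitting and cover you and the paper coincide in substance, and your extra Markov-chain gloss is both unnecessary and incorrect as stated.

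The gap is consequential precisely where your proposal is most ambitious, namely the meeting property: a product-chain argument on $V\times V$ driven by $\overline P\otimes\overline P$ needs process-level (distributional) control, not just expected values, so it cannot be extracted from the theorem alone; and you yourself flag the common-graph coupling of the two tokens as ``the step that demands the most care'' without closing it, so the meeting case remains unproven in your proposal. For what it is worth, the paper's own proof is silent on the meeting property as well (it argues only hitting and cover), and the remark immediately after the corollary warns that $\overline G$ is in effect directed ($\omega(i,j)\neq\omega(j,i)$ is possible), so classical bounds such as the $O(n^3)$ meeting time of \cite{TeWi91} cannot be invoked off the shelf, as your sketch does. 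In short: your hitting and cover arguments match the paper's, but the foundational reduction they are dressed in is wrong, and the meeting property is left open both by you (explicitly) and by the paper (implicitly).
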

\begin{proof}
The computation of the hitting time at theorem \label{hittingdyn} implies that it is finite. Thus the hitting property is verified for any node. The cover property follows from the hitting property.
\end{proof}

\begin{corollary}
Algorithms in \cite{BuSo07} that compute hitting and cover times apply.
\end{corollary}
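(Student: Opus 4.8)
The plan is to read this corollary as a direct consequence of the reduction carried out in the previous theorem: a random walk on the dynamic graph obeys exactly the same hitting-time equations as an ordinary random walk on the static weighted graph $\overline G$ whose edge weights are $\omega(i,j)=\overline{p_{ij}}$. Since the algorithms of \cite{BuSo07} are designed to compute hitting and cover times for random walks on such (static, possibly weighted) graphs, the strategy is to verify that $\overline G$ is a legitimate input for them and that the quantities they return coincide with the dynamic hitting and cover times.

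Concretely, I would proceed in three steps. First, recall from the previous theorem that the dynamic hitting times satisfy $h_{ij}=1+\sum_k\overline{p_{ik}}h_{kj}$ with $h_{ii}=0$, which is precisely the defining linear system for the hitting times of the averaged chain with transition matrix $(\overline{p_{ij}})$. Second, I would argue that this system has a unique solution: since $G$ is connected and the averaged transition probabilities are positive on the edges carrying positive stationary weight, the averaged chain is irreducible, so the associated hitting-time system is nonsingular. Uniqueness is what turns \emph{same equations} into \emph{same values}, guaranteeing that the $h_{ij}$ produced by any correct solver for $\overline G$ are exactly the dynamic hitting times. Third, I would invoke the cover-time relation already used in Corollary \ref{RWProp}: once all hitting times are known, the cover time is obtained from them by the standard aggregation the \cite{BuSo07} algorithms implement, so computing it on $\overline G$ yields the dynamic cover time.

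The main obstacle I anticipate is not the algebra but matching the hypotheses of \cite{BuSo07} to the object $\overline G$, and here two points need care. The first is whether those algorithms are stated for weighted (non-uniform) random walks or only for the uniform walk of Algorithm \ref{RWAlg}; if only the uniform case is covered, one must check that the weighted walk with weights $\overline{p_{ij}}$ still falls within their scope, or reduce to it. The second, more delicate point is accessibility of the effective transition probabilities: $\overline{p_{ij}}=\sum_G\pi(G)p_{ij}(G)$ is an average against the unobserved stationary distribution $\pi$ of the topology process, so no node knows these weights a priori. The resolution is that the \cite{BuSo07} schemes are observational, estimating transition frequencies from the token's own trajectory, and the independence and homogeneity assumptions ensure that these empirical frequencies converge to $\overline{p_{ij}}$; hence the walk genuinely behaves as one on $\overline G$ and the algorithms apply without modification. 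Establishing this convergence cleanly is the part that would demand the most attention.
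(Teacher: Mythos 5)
Your core reduction is exactly the paper's: the corollary carries no proof at all, being stated as an immediate consequence of the preceding theorem, which shows the dynamic hitting times obey the equations $h_{ij}=1+\sum_k\overline{p_{ik}}h_{kj}$, $h_{ii}=0$ of an ordinary random walk on the weighted graph $\overline G$ with $\omega(i,j)=\overline{p_{ij}}$, so the solvers of \cite{BuSo07} apply to that graph. One caveat on your last paragraph: the paper itself (Section \ref{Tt}) describes \cite{BuSo07} as computing hitting times via a single matrix inversion from the transition probabilities, not as a trajectory-based estimator, so your proposed resolution of the ``accessibility'' obstacle rests on a characterization of that reference that conflicts with the paper's own; since the paper treats the corollary purely formally (same equations, hence the algorithms apply), that extra machinery is unnecessary for the statement as made, and the genuinely unresolved issue the paper flags instead is that $\overline G$ may be directed ($\omega(i,j)\neq\omega(j,i)$), so classical bounds need not carry over.
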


However, $\omega(i, j)$ and $\omega(j, i)$ can be different, which would make the graph directed. Classical bounds on hitting times and cover times may not apply.

\section{A self-stabilizing token maintenance mechanism}\label{SSTMM}

\subsection{Principles}
In this section, we focus on the token maintenance mechanism. For the sake of clarity, we assume in this section that the number of nodes in the system is exactly $n$. We will discuss in the next section how to relax this assumption. 

We consider a token that circulates through  the system using a random walk scheme, thus by corollary \ref{RWProp}, all nodes are visited infinitely often (satisfying the first part of the specification, cf. Definition \ref{PS}). The system can be erroneously initiated: configuration with no token, or with several tokens can occur.  To solve the absence of token, we introduce a content in the token. As the designed solution is self-stabilizing, we have to deal with arbitrary initiated token content.


\subsubsection{Dealing with the absence of token}
The lost token situation is solved by a decentralized timeout procedure:  each processor indistinctly has the possibility of producing a new token. 
Each node maintains a timer. Each timer is set at a value $T_m$ time units. (The way to tune the values of $T_m$ will be discussed in Section \ref{Tt}). 
Each node measures the time since the last token visit. If this time is greater than $T_m$, then a new token is created. 
No upper bound is available on the time the token returns to a node $i$, which makes it impossible to use solutions like the one in \cite{Varg00}, consisting in setting a timeout on each node, at the expiration of which, if no token has been received, a new one is created. The following impossibility result proves this.




\begin{proposition}[Impossibility result]
Whatever the timers values of each node in the system,  the closure property is not satisfied \emph{whp}. 
\end{proposition}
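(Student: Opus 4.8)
The plan is to establish the statement in its strongest form, namely that the probability that closure is \emph{never} violated is exactly $0$. Fix once and for all the timer values $(T_i)_{i\in V}$ — arbitrary, but constant — and start the system in a legitimate configuration, i.e. one with a single token. Let $A$ denote the event that closure holds forever, meaning that no node ever fires a timeout and thus that exactly one token is present at every step. On $A$ this unique token performs a random walk, so by Corollary~\ref{RWProp} it visits every node infinitely often \emph{whp}. The whole proof then reduces to showing $\Pr[A]=0$.

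The first ingredient is to translate the naive timeout mechanism into a statement about inter-visit times. Since a node $j$ resets its timer exactly when the token reaches it and then counts elapsed time units (one per step, by the convention of Section~\ref{MDS}), node $j$ fires a timeout, and hence creates a duplicate token, precisely when two consecutive visits of the token to $j$ are separated by more than $T_j$ time units. I would then fix a single node $j$ whose return time has unbounded support — this is exactly the \emph{``no upper bound on the time the token returns to a node''} property the section relies on — and set $q:=\Pr[R>T_j]$, where $R$ is the return time to $j$; by unboundedness $q>0$.

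The second ingredient is a renewal argument via the strong Markov property of the (discretized) walk. Each time the token leaves $j$, the future of the walk restarts afresh from $j$, so the successive inter-visit times $R_1,R_2,\ldots$ form an i.i.d.\ sequence with $\Pr[R_k>T_j]=q>0$ for every $k$. On the event $A$ the token returns to $j$ infinitely often, so infinitely many of these trials occur; the probability that they all stay below the threshold is $\prod_{k}(1-q)=\lim_{N\to\infty}(1-q)^N=0$. Consequently, with full probability some $R_k$ exceeds $T_j$, which forces a timeout at $j$ before the token returns and therefore creates a second token — contradicting the very definition of $A$. Hence $\Pr[A]=0$, that is, closure fails \emph{whp}, and this holds for every choice of the timers.

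The step I expect to be the main obstacle is the justification that $q>0$, i.e.\ that the chosen return time genuinely has unbounded support; this is the crux that separates random-walk circulation from the bounded inter-visit schemes of \cite{Varg00}, and it is worth spelling out that such a node exists in any connected graph with at least three nodes (for instance a node from which the walk may wander through a cycle or across a non-pendant neighbour before returning). A secondary point to handle cleanly is the apparent circularity of conditioning on $A$ (one token forever) only to deduce a second token: this is resolved by reading the argument throughout as a proof that $\Pr[A]=0$, the i.i.d.\ renewal structure of the inter-visit times being valid on $A$ since there a single token circulates. Finally one should flag the degenerate exception $K_2$, where the token oscillates deterministically with return time always $2$, so that $q=0$; excluding such trivial topologies is what guarantees $q>0$ and hence the conclusion.
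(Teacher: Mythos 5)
Your proof is correct, and it shares its combinatorial core with the paper's, but the probabilistic packaging is genuinely different. The paper argues by explicit construction: it picks a node $i$ with two neighbors $j$ and $k$ (existence requiring, as in your remark on $K_2$, a connected graph with more than two nodes), lower-bounds by $\frac{1}{d^T}>0$ the probability that the token bounces $i\to j\to i\to\cdots$ for $T$ consecutive steps while avoiding $k$, and concludes that $k$'s timer expires and an undue token is created; the passage from this single positive-probability event to ``\emph{whp} the system leaves a legitimate configuration'' is left implicit. Your argument abstracts exactly the same phenomenon --- the bounce path across a non-pendant neighbour is precisely the witness that the return time to your node $j$ has unbounded support, so $q=P[R>T_j]>0$ --- but then makes the iteration rigorous: successive return times to $j$ are i.i.d.\ by the strong Markov property, infinitely many such trials occur unless a timeout fires first, and so the probability that closure is never violated is $\lim_{N\to\infty}(1-q)^N=0$. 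That renewal step is what the paper asserts without proof, and your reading of the whole argument as a proof that $P[A]=0$ (the trajectory on $A$ being a bona fide random-walk trajectory all of whose return gaps at $j$ stay below $T_j$) disposes cleanly of the conditioning circularity. What the paper's version buys instead is a completely elementary, quantitative per-attempt bound with no appeal to renewal theory. Two points you should still spell out: the existence of a node with unbounded return-time support in any connected graph with $n\geq3$ (take $j$ having a neighbour $a$ that has a neighbour $b\neq j$; if no such $j$ existed the graph would be $K_2$, and in a star every leaf qualifies), and the fact that the first waiting period at $j$ is a hitting time rather than a return time --- harmless, since only the subsequent i.i.d.\ gaps are used.
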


\begin{proof}
Let $i$ be a node in $G$, with (at least) two neighbors $j$ and $k$ (in a connected graph with more than two nodes, such a node exists). Consider a legitimate configuration with the token on $i$ and a timer $T$ on $k$. Note $d=\max\{\deg(j), \deg(i)\}$. Considering the case when the token is on $i$, with probability greater than $\frac1d$, it goes to $j$. Then, with probability greater than $\frac1d$, it goes to $i$. Then, with probability greater than $\frac1{d^T}>0$, the token does not hit $k$ for $T$ steps, leading to its timeout being triggered, and an unnecessary token creation. Thus, \emph{whp}, the system spontaneously leaves a legitimate configuration.
\end{proof}

To avoid unnecessary token creation, we propose a solution with the following mechanisms:
\begin{enumerate}
\item A local mechanism for monitoring the last visit time of the token to a node $i$. 
\item A mechanism for detecting that some timers are about to expire. This mechanism is  maintained by the node which is the current token holder.
\item A mechanism maintaining a spanning tree, which is rooted at the current token holder.
\item A distributed mechanism that propagates messages on this tree in order  to reset the timers.
\end{enumerate}


The first two items correspond to the decentralized timeout procedure. 
The last two items correspond to the reloading wave. 
\paragraph{(3) Reloading wave definition}
A reloading wave is defined  regarding a token identity.  

When a node $i$ receives reloading wave message, $i$ is notified that a token is still circulating in the system, and it resets its timer. Thus, the reloading wave prevents node $i$ from creating a copy of token $t$. 

The reloading wave is broadcast through an adaptive (spanning) tree. There is no additional protocol, since we use token $t$ content. The  token collects and stores the identities of each node during its random walk traversal. This content is based on the history of the token's moves. Such a token is called a \emph{circulating word}. Since the token is circulating continuously through the network, the induced tree is perpetually updated taking into account the possible network topology changes.

A token $t$ contains the following data structures:
\begin{itemize}
\item A counter, $t.hop$ that represents the number of edges visited during the traversal.
\item An array, $t.table$. Each time the token moves from a node $j$ to a node $i$,  the token sets $t.table[j] = i$, and $t.table[i]=i$.
\end{itemize}

Each time a node $i$ receives the token $t$, a tree rooted on $i$ can be locally computed by $i$ using the topological information stored in the token.
The tree induced by $t.table$ is $(V,E_T)$ where $E_T = \{ (k, t.table(k)), k\in V \mbox{and } k\not = t.table(k) \}$.

\begin{example}
From the following sequence of the token's moves $<1,3,5,4,3>$ the token is at node 3 and $t.table$ is ($\bot$ represents the value \emph{``undefined''}):
$
\begin{tabular}{|c|c|c|c|c|}
\hline 
1&2&3&4&5 \\ \hline
3 &$\bot$&3&3&4 \\ \hline
\end{tabular}
$

The forest induced by $t.table$ is $(V,E_T)$ where $V=\{ 1,3,4,5\} \cup \{2\}$ and 
$E_T= \{ (1,3), (4,3), (5,4) \}$. 

If the next token moves are $<2,1,2,3,1>$ the token table is updated to  
 $
\begin{tabular}{|c|c|c|c|c|}
\hline 
1&2&3&4&5 \\ \hline
1 &3&1&3&4 \\ \hline
\end{tabular}
$
and the tree induced by $t.table$ is now $\{V=\{ 1,2,3,4,5\} $ and 
$E_T= \{ (3,1), (2,3), (4,3), (5,4) \} \}$. 

\end{example}

\paragraph{(4) The reloading wave mechanism} The reloading wave is broadcast under the following conditions:
the token maintains the counter $t.hop$ which is incremented at each hop. The counter is set to 0 at token creation.  This value is compared to the timeout value $ T_m$ minus the time to achieve a wave propagation. 
When this counter value is superior or equal to the latter value, the node that holds the token launches the wave and the token counter $t.hop$ is reset to 0.
When a node receives the wave from  the token, it reloads its timer  to $T_m$.

\subsubsection{Configurations with multiple tokens}

To design a self-stabilizing solution, starting from any initial configuration, the system must converge to a correct behavior: exactly one random walk based token circulates through the system. The previous section deals with the way to produce at least a token when a communication deadlock occurs. 
Faulty configurations with several tokens are possible (due to duplication for instance). 

Various articles \cite{IsJa90,TeWi91} have dealt with the multiple token situation in case of a random walk scheme. The authors propose to use the meeting property (cf. Corollary \ref{RWProp}) of random walks to reduce the number of token to 1: each time a node receives several tokens, it discards all of them but one. Thus, in finite time, a single token remains in the network.

Two strategies are possibles: 
\begin{itemize}
\item remove all tokens but one;
\item merge the content of all tokens in a new one.
\end{itemize} 

We propose to merge all the topological information before discarding any token. This strategy entails more computation, but accelerates the construction of a spanning tree inside a token. Once all the sub-trees contained in the different tokens have been merged (cf. Procedure \ref{merge}), the resulting sub-tree is stored in the remaining token (the one that is not discarded, cf. Rule R1.b Algorithm \ref{Main}).  


\begin{procedure}[H]
\begin{algorithmic}
\FOR{$k = 0$ to $\mathcal N$}
 \IF{$(t1.table[k] = \bot)\wedge (t2.table[k] \neq \bot)$}
  \STATE $t1.table[k]\rcp t2.table[k]$
 \ENDIF
\ENDFOR
\STATE $t1.hop\rcp \max(t1.hop, t2.hop)$
\end{algorithmic}
\caption{Procedure: merge\_tokens(t1: token, t2: token) on node $i$}
\label{merge}
\end{procedure}

\begin{figure}[H]
\begin{center}
\begin{tabular}{cc}
\includegraphics[scale=0.3]{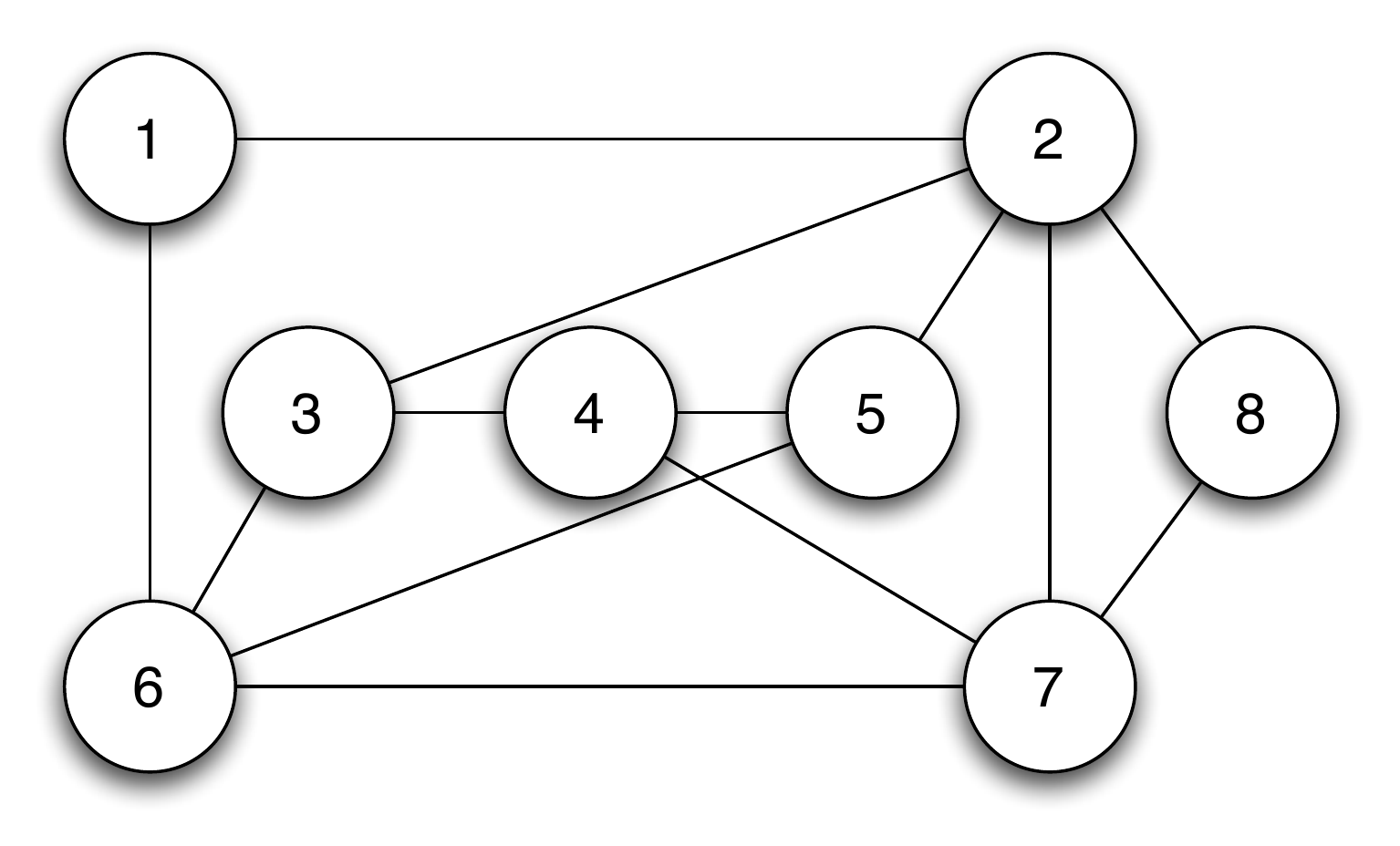}&
\includegraphics[scale=0.3]{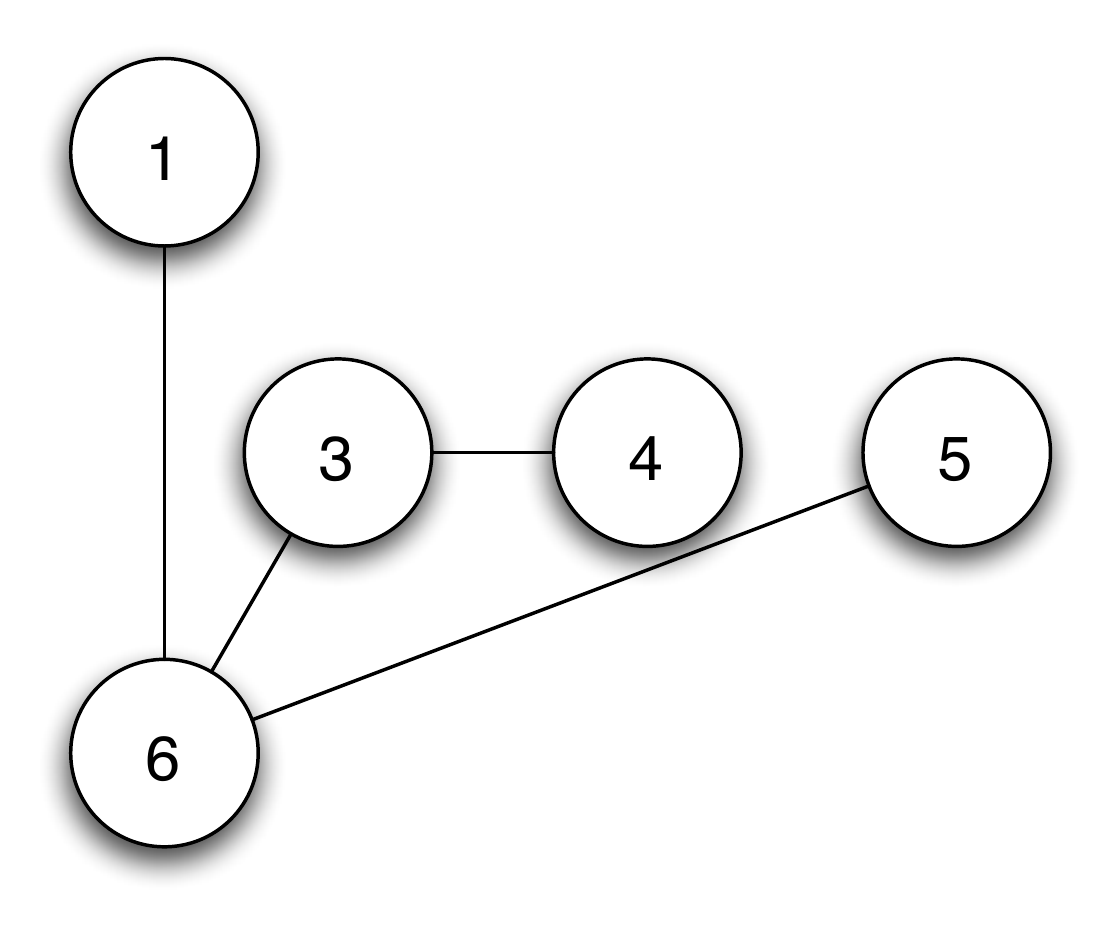}\\
Network&token $t1$\\
\includegraphics[scale=0.3]{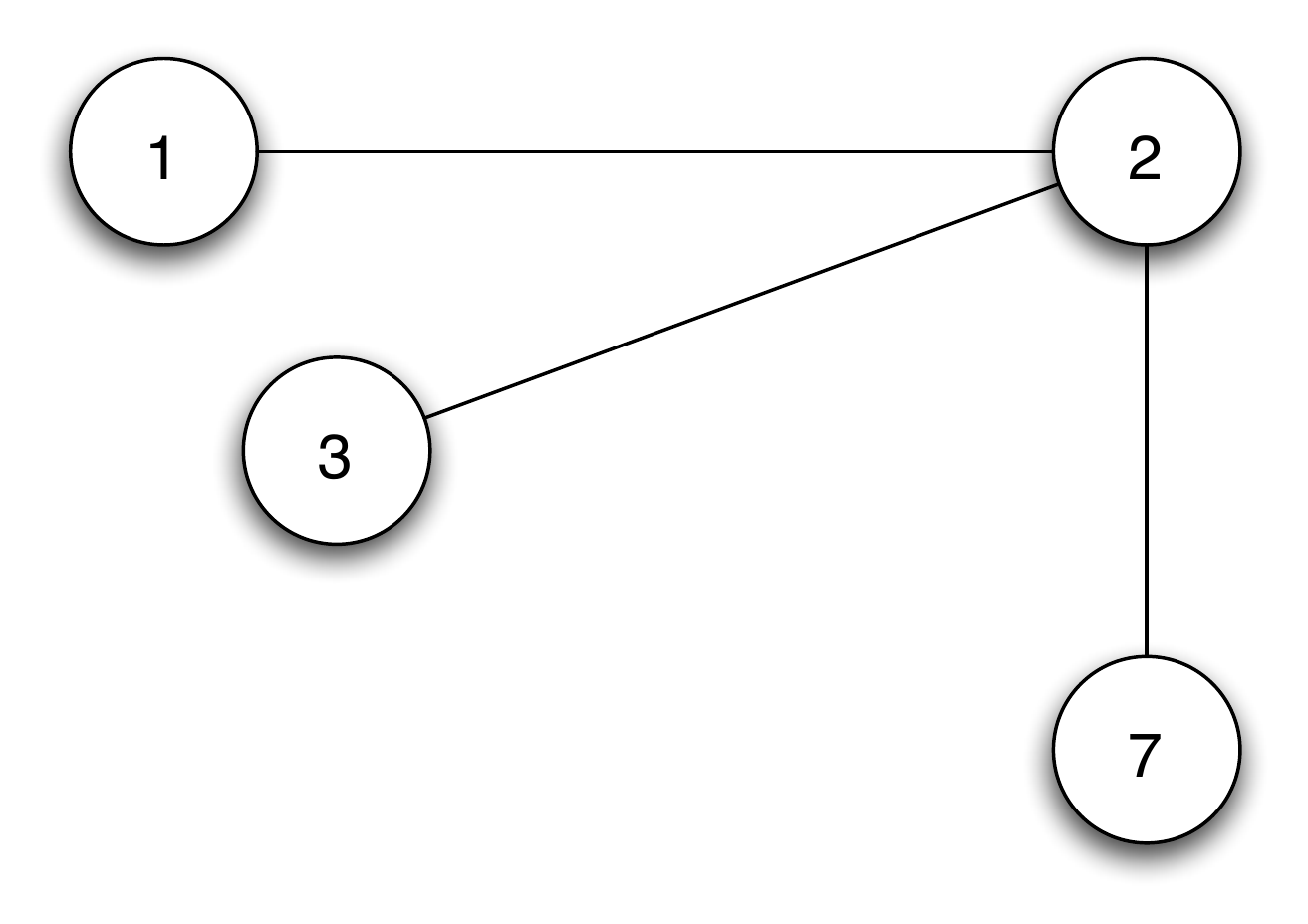}&
\includegraphics[scale=0.3]{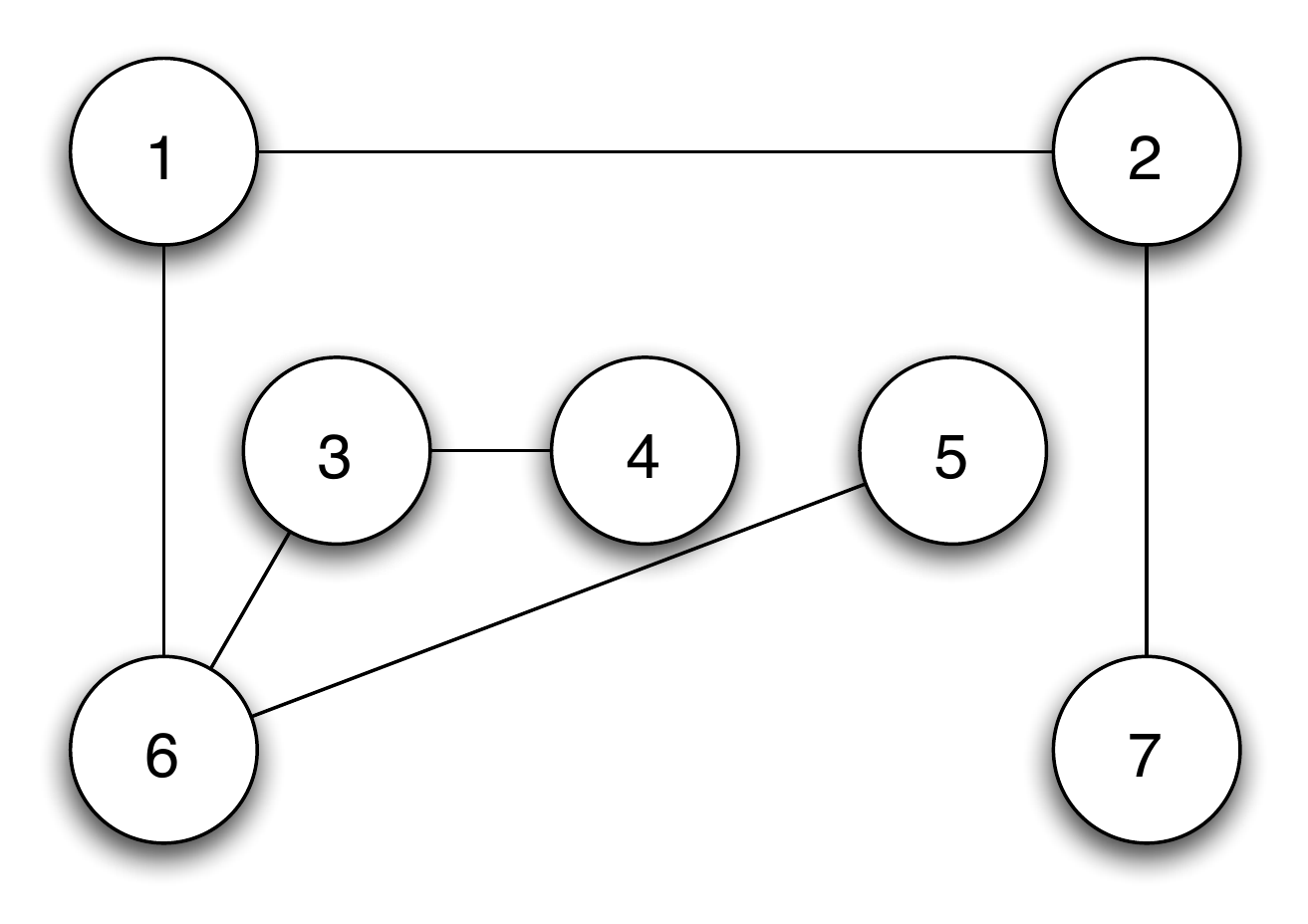}\\
token $t2$& Resulting token\\
\end{tabular}
\caption{Example of two merged tokens}\label{}
\end{center}
\end{figure} 

\color{black}

\subsubsection{Configuration with arbitrary token content}
Transient failures can produce erroneous token content: a node $j$ may be registered as father of node $i$ in the reloading wave tree while they are not neighbors. Then, when $i$ is hit by the token, its father is set to itself, and the error is corrected. 


\subsection{The algorithm}

The algorithm is written according 4 events on a node:
\begin{itemize}
\item Node $i$ receives one or several tokens ({\bf R1}). Each token is updated and its consistency is checked ({\bf R1.a}). In case of multiple tokens, they are merged into one ({\bf R1.b}). If the condition to launch the reloading wave is satisfied, the node begins the propagation of the reloading wave ({\bf R1.c}).
Finally the token is forwarded to a neighbor chosen at random ({\bf R1.d}) and the node resets its timer ({\bf R1.e}).
\item The timer of node $i$ expires ({\bf R2}). A new empty token is created ({\bf R2.a}) and forwarded to a neighbor chosen at random ({\bf R2.b}). The node resets its timer ({\bf R2.c}).
\item Node $i$ receives a reloading wave message ({\bf R3}). The node continues the reloading wave propagation ({\bf R3.a}) and resets its timer ({\bf R3.b}).
\item Node $i$'s clock ticks ({\bf R4}). The node decrease its timer. 
\end{itemize}

This algorithm has four rules \textbf{R1} to \textbf{R4}, that are split into sections. All sections in a rule are executed in sequence, and the rule is executed atomically.

\begin{algorithm}[H]
\begin{algorithmic}
\STATE {\bf R1: Upon reception of a set $T$ of Token messages}
\STATE {\it\underline{a: Tokens update} }
\FORALL{$t\in T$}
\STATE $t.table[i]\rcp i$
\STATE $t.table[t.emitter]\rcp i$
\STATE $t.hop \rcp t.hop + 1$
\ENDFOR
\STATE {\it \underline{b: Tokens merge }}
\STATE choose $t$ in $T$
\STATE $T\rcp T\backslash \{t\}$
\STATE $t'\rcp t$
\WHILE{$T\neq\emptyset$}
\STATE choose $t_2$ in $T$
\STATE merge\_tokens($t'$, $t_2$)
\STATE $T\rcp T\backslash \{t_2\}$
\ENDWHILE
\STATE {\it \underline{c: Possible Reloading Wave propagation}}
\IF{$t'.hop \geq T_m -  (n+1)$} 
 \FORALL{$j$ such that $t'.table[j]=i\wedge j\in N_i$}
  \STATE send $reload, t'.table$ to $j$
 \ENDFOR
 \STATE $t'.hop\rcp 0$
\ENDIF
\STATE {\it \underline{d: Token circulation }}
\STATE Send $t'$ to $j$ chosen randomly in $N_i$
\STATE {\it \underline{e: Node update}}
\STATE $timer \rcp T_m$
\vspace{0.5cm}
\STATE {\bf R2: Upon a release of timer}
\STATE {\it \underline{a: Token Creation}}
\FOR{$j=0$ to $\cal N$}
 \STATE $t'.table[j]\rcp \bot$
\ENDFOR
\STATE $t'.table[i]\rcp  i$
\STATE $t'.hop \rcp 0$
\STATE {\it \underline{b: Token Circulation}}
\STATE Send $t'$ to $j$ chosen randomly in $N_i$
\STATE {\it \underline{c: Node update}}
\STATE $timer\rcp T_m$
\vspace{0.5cm}
\STATE {\bf R3: Upon a reception of message ($reload,table$)}
\STATE {\it \underline{a: Reloading Wave propagation }}
\STATE $table[i]\rcp \bot$ \COMMENT{to ensure that the reloading wave terminates}
\FORALL{$j$ such that $table[j]=i\wedge j\in N_i$}
 \STATE send $reload,table$ to $j$
\ENDFOR
\STATE {\it \underline{b: Node update }}
\STATE $timer \rcp T_m$
\vspace{0.5cm}
\STATE {\bf R4: Upon a clock tick}
\STATE $ timer \rcp timer -1$
\end{algorithmic}
\caption{Algorithm on  site $i$}
\label{Main}
\end{algorithm}

\subsection{Proofs}
We present in this section the correctness proofs of the algorithms. We show that our algorithm is self-stabilizing  and achieve a token circulation: the execution of our algorithm starting in an arbitrary configuration will reach a \emph{legitimate configuration} (the set $\LC$ of configurations). 

\subsubsection{Preliminaries}

A configuration $\gamma$ is characterized by:
\begin{itemize}
	\item the graph $G_\gamma=(V_\gamma, E_\gamma)$; in this section, no topological change is assumed, so that this graph is constant $G_\gamma=G=(V, E)$;
	\item the value of variables:\begin{itemize}
		\item the value of all timers $timer_i(\gamma)$ for all $i\in V_\gamma$\end{itemize}
	\item the multi-set of messages, composed of:
	\begin{itemize}
		\item $Token_{\gamma}$ the multiset of token messages, in $E\times V^V\times [0; T_m]$: $t=((i, j), table_t, hop_t)\in Token_\gamma$ means that there is (at least) one token $t$ sent from $i$ and pending reception by $j$ with table $table_t$ and hop counter $hop_t$; we note $Token_{\gamma}(i)=\{((j, i), table_t, hop_t)\in Token_{\gamma}\}$ the set of all token messages pending reception by $i$; for $t=((j, i), table, hop)\in Token_{\gamma}$, we note $t.emitter=j$, $t.recipient=i$, $t.table=table$ and $t.hop=hop$;
		\item $Wave_\gamma$ the multiset of reloading wave messages, in $E\times V^V$: $w=((i, j), table_w)\in Wave_\gamma$ means that there is (at least) one reloading wave message $w$ sent from $i$ and pending reception by $j$ with table $table_w$.
	\end{itemize}
\end{itemize}


We consider that the execution of an algorithm is atomic.

First, we define what we call a token, and then we prove that the reloading wave has the intended effect: no token can be created by a node that has already received a token. Finally, we prove that the algorithm provides a self-stabilizing traversal scheme.



Consider two configurations $\gamma\vdash\gamma'$ (the execution being supposed atomic, such a step involves that a message has been received and treated to reach $\gamma'$ from $\gamma$). If $\gamma'$ is the result of the application of \textbf{Rk} ($1\leq k\leq4$) by node $i$ we note $\gamma\vdash^{\mathrm{\mathbf{Rk}}(i)}\gamma'$. The execution of all algorithms being supposed atomic, if $\gamma\vdash\gamma'$, we have the following possibilities:

\begin{enumerate}
\item \label{jetons} $\gamma\vdash^{\mathrm{\mathbf{R1}}(i)}\gamma'$: then, $\gamma$ is such that $\exists ((k, i), t)\in T\subset Token_\gamma$, and $\gamma'$ is obtained from $\gamma$ by:
	\begin{enumerate}
	\item\label{countjeton} $t'.hop=\max\{t.hop/t\in T\}+1\mod (T_m-(n+1))$;
	\item\label{tablejeton} $\forall k\neq i, j, (\exists t\in T, t'.table[k]=t.table[k]\neq\bot)\vee(\forall t\in T, t.table[k]=\bot)$; $t'.table[i]=t'.table[j]=i$;
	\item\label{wavejeton} if $t'.hop=0$, $Wave_{\gamma'}=Wave_\gamma\cup\{((i, j), t'.table)/t'.table[j]=i\wedge j\in N_i\}$;
	\item\label{envoijeton} $Token_{\gamma'}=Token_{\gamma}\backslash T\cup\{((i, j), t')\}$, with $j\in N_i$;
	\item\label{timerjeton} $timer_i^{(\gamma')}=T_m$;
	\end{enumerate}
\item $\gamma\vdash^{\mathrm{\mathbf{R2}}(i)}\gamma'$: then, $\gamma$ is such that $timer_i^{(\gamma)}=0$, and $\gamma'$ is obtained from $\gamma$ by:	
	\begin{enumerate}
	\item\label{tablevide1} $\forall j\neq i, t'.table[j]=\bot$; $t'.table[i]=i$;
	\item\label{envoijetonvide} $Token_{\gamma'}=Token_{\gamma}\cup\{((i, j), t')\}$ with $j\in N(i)$;
	\item\label{timercrea} $timer_i^{(\gamma')}=T_m$;
	\end{enumerate}
\item $\gamma\vdash^{\mathrm{\mathbf{R3}}(i)}\gamma'$: then, $\gamma$ is such that there is $((k, i), w)\in Wave_\gamma$, and $\gamma'$ is obtained from $\gamma$ by:
	\begin{enumerate}
	\item\label{vague} $Wave_{\gamma'}=Wave_{\gamma}\backslash\{((k, i), w)\}\cup\{((i, j), w')/w.table[j]=i\wedge j\in N_i, \forall k, w'.table[k]=w.table[k], w'.table[i]=\bot\}$;
	\item\label{timervague} $timer_i^{(\gamma')}=T_m$;
	\end{enumerate}
\item\label{clocktick} $\gamma\vdash^{\mathrm{\mathbf{R4}}(i)}\gamma'$: then, $\gamma$ is such that there is $timer_i^{(\gamma)}>0$, and $\gamma'$ is obtained from $\gamma$ by:
	\begin{enumerate}
	\item $timer_i^{(\gamma')}=timer_i^{(\gamma)}-1$.
	\end{enumerate}
\end{enumerate}

In item \ref{jetons}, $T$ represents the set of tokens that are received by $i$. $T$ contains at least one token, but may contain several of them, in which case they are merged into one token noted $t'$ in the sequel. \ref{countjeton} is the update of the hop counter: the hop counter is decreased by one, and if it reaches 0, a wave is propagated (\ref{wavejeton}) and the hop counter reset (hence the $\mod T_m-(n+1)$). Node $i$ resets its timer (\ref{timerjeton}). \ref{tablejeton} is the computation of the new table: $i$ is the root, and the father of the sender, the remaining of the tree is obtained by picking for each node of the tree its father in one of the received trees.

At the timer expiration on node $i$, it sends a newly created token. \ref{tablevide1} is the creation of a tree consisting of the single node $i$. At \ref{timercrea}, the timeout is reset. \ref{envoijetonvide} states that, at some edge neighboring $i$, the new token is added.

In \ref{vague}, node $i$ receives a $Wave$ message $w$ and sends $Wave$ messages $w$ to all its children as indicated in $w.table$. It resets its timer (\ref{timervague}).

At each clock tick, node $i$ decrements its timer (\ref{clocktick}).

Between two successive applications of \textbf{R4} by a given node, all nodes that can apply \textbf{R3}, \textbf{R1} and \textbf{R2} apply them. In rules \textbf{R1} and \textbf{R2}, node $j$ is chosen at random.

\begin{definition}[Token and state of a token]
From \textbf{R1}, we say that any token in $T$ has become $t'$. For $t$ in $T$, we will note $t^{(\gamma)}\rightarrow t^{(\gamma')}$.
\end{definition}

\begin{definition}
A node $i$ is said to receive a token at step $\gamma\rightarrow \gamma'$ if there exists a token $t$ in an edge to $i$ at configuration $\gamma$, with $t\rightarrow t'$, and $t'$ is in an edge from $i$.
\end{definition}

All tokens follow a random walk. In particular, the hitting and cover properties are verified, so that, for any node $i$ and any token $t$ in a configuration $\gamma$, there exists a configuration $\gamma'$ in $\mathcal E_\gamma$ such that $t^{(\gamma')}$ is in an edge coming from $i$.

\subsubsection{All tokens are eventually correct}

This step needs no synchronism. Basically, the only needed property is that the random walk covers the system, \emph{ie} that the random numbers generators are independent.

\begin{definition}We say that a token $t$ is \emph{correct} and we note $correct(t)$ if $$\forall k\in V, t.table[k]\neq\bot\Rightarrow(k, t.table[k])\in E$$\end{definition}

\begin{lemma}\label{1}
$\mathcal{A}_1=\{\gamma \in \mathcal{C}/ \forall t\in Token_\gamma, correct(t)\}$ is an attractor of $\mathcal{C}$
\end{lemma}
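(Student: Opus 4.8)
The goal is to show that $\mathcal{A}_1$ (the set of configurations in which every token in transit is correct, meaning its table records only genuine edges) is an attractor of $\mathcal{C}$. By Definition \ref{attractor}, this requires two things: convergence (every computation from any configuration eventually enters $\mathcal{A}_1$) and closure (once in $\mathcal{A}_1$, the system stays there).

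\emph{Plan of proof.} I would prove closure first, since it is the cleaner half. Suppose $\gamma \in \mathcal{A}_1$ and $\gamma \vdash \gamma'$. I would inspect each rule \textbf{R1}--\textbf{R4} and check that no incorrect table entry can be introduced. Rules \textbf{R3} and \textbf{R4} do not touch token tables, so they preserve correctness trivially. Rule \textbf{R2} creates a fresh token whose table is $\bot$ everywhere except $t'.table[i]=i$; since a self-loop $(i,i)$ is not required to be an edge by the definition of $correct$ (the implication is vacuous or the diagonal is excluded), this token is correct. The crucial rule is \textbf{R1}: here the updates set $t'.table[i]=i$ and $t'.table[t.emitter]=i$, where $t.emitter=j$ is the node from which $i$ just received the token; since a token in an edge from $j$ to $i$ certifies $(j,i)\in E$, the newly written entry $t'.table[j]=i$ is a genuine edge. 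The merge step (Procedure \ref{merge}) only copies entries from tokens that were already correct in $\gamma$, so no spurious edge appears. Hence $\gamma'\in\mathcal{A}_1$, giving closure.

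\emph{Convergence} is the heart of the argument and the main obstacle. Starting from an arbitrary $\gamma_0$, an initial token may have arbitrary (incorrect) table entries. The key observation, foreshadowed in the ``arbitrary token content'' discussion of the excerpt, is that each entry $t.table[k]$ is \emph{overwritten} to a correct value the first time the token traverses $k$: when the token moves into $i$ from $j$, rule \textbf{R1} sets $t.table[i]=i$ and $t.table[j]=i$, both genuine edges. So I would argue that once the random walk carried by token $t$ has visited every node at least once, every entry of $t.table$ has been rewritten by a traversal and is therefore correct. I would invoke the cover property of random walks (Corollary \ref{RWProp}): the walk covers $V$ in finite expected time, so \emph{whp} each surviving token becomes correct. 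Because the number of tokens can only decrease (merges in \textbf{R1.b}) or increase through creation (\textbf{R2}), and every newly created token is correct by the closure analysis, I would conclude that after every currently-circulating initial token has covered the graph, all tokens are correct, so the computation reaches $\mathcal{A}_1$ \emph{whp}.

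\emph{The delicate point} is handling the interaction between covering and merging: a token that is still incorrect may be merged into another, and I must check that the merge cannot resurrect a bad entry (it cannot, since \texttt{merge\_tokens} only writes into a $\bot$ slot of $t_1$ from a non-$\bot$ slot of $t_2$, and a bad slot that has not yet been overwritten by a traversal will itself be corrected once the merged token covers the relevant node). A second subtlety is that the statement of the lemma asserts a \emph{deterministic} attractor, whereas covering is only guaranteed \emph{whp}; I would either read convergence here in the \emph{whp} sense consistent with the probabilistic attractor of Definition \ref{probattractor}, or note that correctness of an individual entry is forced deterministically the first time its node is traversed and that every node is traversed in finite time along every covering trajectory. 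I expect reconciling this deterministic-versus-probabilistic phrasing, and making the ``finite number of initial incorrect entries, each eventually overwritten'' bookkeeping precise, to be the part requiring the most care.
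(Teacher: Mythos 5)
Your proposal is correct and follows essentially the same route as the paper: a rule-by-rule inspection showing that no step can introduce a new inconsistent table entry (\textbf{R2} tokens are born correct, \textbf{R1} writes only the genuine edge $(j,i)$ just traversed, merges only copy pre-existing entries), combined with the hitting/cover property of random walks to guarantee that every surviving bad entry is eventually overwritten when its token visits the corresponding node. The paper merely packages the same bookkeeping as a potential-set argument (the set $inc(\gamma)$ of token--node inconsistency pairs, shown non-increasing and eventually strictly decreasing), and it shares the two loose points you flagged --- the self-loop entry $t.table[i]=i$ being exempted from the edge requirement, and the deterministic ``attractor'' phrasing resting on a probability-one hitting argument.
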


\begin{proof}
Note $inc(\gamma)=\{(t, i)\in Token_\gamma\times V/(i\neq t.emmiter\wedge t.table[i]\neq\bot\wedge(i, t.table[i])\notin E)\vee(i=t.emitter\wedge t.table[i]\neq i)\}$. A token is correct if and only if it does not appear in this set. We will show that eventually, $inc(\gamma)=\emptyset$. First, we show that it is non-increasing, and then that, if it is greater than 0, then it eventually decreases.

\textbf{R3} and \textbf{R4} do not affect $Token_\gamma$, and in consequence $inc(\gamma)$.

\textbf{R2} creates a new empty token $t$. This token is correct: for all $j\neq t.emitter$, $t.table[j]=\bot$, so that $(t, j)\notin inc(\gamma)$. Since other tokens are left unchanged, $|inc(\gamma)|$ does not increase.

Consider the case when \textbf{R1} is applied by node $i$ to a set of tokens $T$. Then, $Token_{\gamma'}=Token_{\gamma}\backslash T\cup\{t'\}$ with $t'.table[i]=i$, $\forall j\neq i, \exists t\in T, t'.table[j]=t.table[j]$. Thus, each inconsistency in $t'$ comes from an inconsistency in a token in $T$: if $(t', j)\in inc(\gamma)$, there exists (at least) a token $t$ in $T$ such that $(t, j)\in inc(\gamma)$. Thus, $inc(\gamma')\subset inc(\gamma)$. Now, if $t\in T$ is such that $(t, i)\in inc(\gamma)$, $t'.table[i]=i$, and $(t', i)\notin inc(\gamma')$, so that $inc(\gamma')\subsetneq inc(\gamma)$ (note that by merging several tokens, some other inconsistencies may be corrected).

Thus, $inc(\gamma)$ does not increase. Now, consider a configuration $\gamma$ such that $inc(\gamma)\neq\emptyset$. Then, there exists $(t, i)\in inc(\gamma)$. The hitting property entails that $t$ will eventually hit $i$ at configuration $\gamma'$, and then $inc(\gamma')\subset inc(\gamma)\backslash\{(t, i)\}$.

Thus, if $inc(\gamma)\neq\emptyset$, it eventually decreases. Eventually, it reaches $\emptyset$, and then, all tokens are correct.\end{proof}

\subsubsection{There is eventually a correct token (at least) in the system}

This step requires that if a single rule is enabled, it is eventually triggered.

\begin{lemma}\label{2}
$\mathcal{A}_2=\{\gamma \in \mathcal{C}, |Token_\gamma| \geq 1\}$ is an attractor of $\mathcal{C}$
\end{lemma}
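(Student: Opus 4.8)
The plan is to prove the two defining properties of an attractor separately, and the closure property is the easy half, so I would dispatch it first by a case analysis on the four rules. The quantity to track is $|Token_\gamma|$. Rule \textbf{R1} receives a set $T\subseteq Token_\gamma$ and emits a single merged token, so $|Token_{\gamma'}|=|Token_\gamma|-|T|+1$; since $T\subseteq Token_\gamma$ forces $|T|\leq|Token_\gamma|$, this is $\geq 1$. Rule \textbf{R2} strictly increases the count ($|Token_{\gamma'}|=|Token_\gamma|+1$), while \textbf{R3} (wave propagation) and \textbf{R4} (clock tick) leave $Token_\gamma$ untouched. Hence no single step can drop the number of tokens below $1$, and by induction on the length of a computation, once $|Token_{\gamma_1}|\geq 1$ we have $\gamma_i\in\mathcal A_2$ for all $i$, which is exactly closure.

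For convergence I would argue by contradiction: suppose some computation $(\gamma_1,\gamma_2,\ldots)\in\mathcal E_{\C}$ never enters $\mathcal A_2$, so $|Token_{\gamma_k}|=0$ for every $k$. Since \textbf{R1} needs an incoming token and \textbf{R2} actually creates a token, in such a computation \textbf{R1} is never enabled and \textbf{R2} never fires. Because \textbf{R2} is triggered precisely when some timer reaches $0$, and (as stated just before the lemma) an enabled rule is eventually triggered, no timer ever reaches $0$ in this computation. Timers are reset to $T_m$ only by \textbf{R1}, \textbf{R2}, \textbf{R3}; the first two are excluded, so only \textbf{R3} can keep timers away from $0$. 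The plan is then to show the reloading waves are exhausted in finite time: after $Wave_\gamma$ becomes empty and stays empty, no rule resets any timer, so each timer is monotonically decremented by the repeated \textbf{R4} applications (which occur between successive ticks by the model's fairness) and must reach $0$, enabling \textbf{R2} and producing a token, the desired contradiction.

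The hard part will be establishing that wave propagation terminates from an \emph{arbitrary} initial configuration, whose wave tables need not encode a genuine forest and may contain cycles. The key observation I would exploit is the assignment $table[i]\leftarrow\bot$ in \textbf{R3.a}: whenever a wave is forwarded to a node $i$, it is forwarded because $i$ is a child of the sender, i.e.\ the carried table satisfies $table[i]=\text{sender}\neq\bot$; thus every \emph{forwarded} wave arrives at $i$ with a non-$\bot$ entry at $i$, and processing it strictly decreases its number of non-$\bot$ entries $\phi(w)=|\{k:w.table[k]\neq\bot\}|$, while cycles in an arbitrary table are broken after a single traversal for the same reason. Only the finitely many wave messages already present in $\gamma_1$ can be processed with $table[i]=\bot$ and so fail to decrease $\phi$. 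I would formalize termination with a well-founded (multiset) ordering on $\{\phi(w):w\in Wave_\gamma\}$: each processing of a forwarded wave replaces one element by strictly smaller elements (a multiset decrease), and there are only finitely many exceptional initial-wave steps, so no infinite sequence of \textbf{R3} firings is possible without a token to spawn new waves. This yields that $Wave_\gamma$ empties in finite time, completing the convergence argument and hence the proof that $\mathcal A_2$ is an attractor of $\C$.
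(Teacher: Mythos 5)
Your proof is correct and follows the paper's route almost exactly: closure by a case analysis on the four rules (\textbf{R1} consumes a set $T$ of tokens but re-emits one merged token, \textbf{R2} adds one, \textbf{R3}/\textbf{R4} leave $Token_\gamma$ unchanged), and convergence by showing that with no token present \textbf{R1} is disabled, the reloading waves die out in finite time, after which repeated applications of \textbf{R4} drive some timer to $0$ so that \textbf{R2} fires and creates a token (your phrasing as a contradiction versus the paper's direct argument is only packaging). The one point of genuine divergence is the wave-exhaustion sub-argument. The paper uses a counting argument: since a wave message is sent to $i$ only by the node $table[i]$, and entry $i$ is blanked to $\bot$ when $i$ forwards, each node can receive at most one wave message per message initially present in $Wave_\gamma$, so \textbf{R3} fires only finitely often. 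You instead use the potential $\phi(w)$ (number of non-$\bot$ entries of $w.table$) with a well-founded multiset ordering, treating the finitely many initial messages whose recipient entry is already $\bot$ as exceptional steps. Both arguments rest on the same mechanism (the assignment $table[i]\leftarrow\bot$ in \textbf{R3.a}, which breaks cycles in an arbitrarily corrupted table), but yours is the more carefully formalized: the paper's claim that each site receives at most one descendant of each initial wave is asserted rather than proved by induction, and your multiset argument also handles explicitly the fact that an exceptional step may replace one wave message by several of equal potential without decreasing it, which the paper's counting glosses over. The cost is only length; both yield the same conclusion that $Wave_\gamma$ empties, and hence that $\mathcal A_2$ is an attractor.
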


\begin{proof}
First, we show that if there is a token in the system, it cannot disappear. Consider a configuration $\gamma$ such that $Token_\gamma\neq\emptyset$ and $\gamma\rightarrow\gamma'$. If $\gamma\rightarrow^\mathrm{\mathbf{R3}(i)}\gamma'$ or $\gamma\rightarrow^\mathrm{\mathbf{R4}(i)}\gamma'$, then $Token_{\gamma'}=Token_\gamma\neq\emptyset$. If $\gamma\rightarrow^\mathrm{\mathbf{R2}(i)}\gamma'$, $Token_{\gamma'}\supset Token_{\gamma}\neq \emptyset$. Last, if $\gamma\rightarrow^\mathrm{\mathbf{R1}(i)}\gamma'$, $Token_{\gamma'}$ contains the token put at \textbf{R1}.d, and is not empty.

Suppose $Token_{\gamma}=\emptyset$. First, we show that $Wave_{\gamma'}$ is eventually empty. Since $Token_\gamma=\emptyset$, if \textbf{R2} is triggered, a token is created and $Token_\gamma$ is no longer empty. Aside \textbf{R2}, the only rules that can be triggered are \textbf{R3} and \textbf{R4}. \textbf{R4} does not modify $Wave$. Consider a message $((i, j), w)\in Wave_\gamma$ received at step $\gamma\rightarrow\gamma'$. $Wave_{\gamma'}=Wave_{\gamma}\backslash\{((k, i)w)\}\cup\{((i, j), w')/w.table[j]=i\wedge j\in N_i, \forall k, w'.table[k]=w.table[k], w'.table[i]=\bot\}$. Thus, since a wave message $v$ is sent to $i$ only by $v.table[i]$, $i$ cannot receive any more message triggered by $w$. Thus, all sites can receive at most one wave message for each wave message present in $Wave_\gamma$. Thus, eventually, $Wave_\gamma=\emptyset$.

Now, the only rules that apply are \textbf{R2} and \textbf{R4}. The continuing application of \textbf{R4} leads to a timeout (or even all of them, leaving \textbf{R2} the only activated rule) to reach 0, so that \textbf{R2} is triggered, and a token created.\end{proof}

\begin{corollary}
$\mathcal A_1\cap \mathcal A_2$ is an attractor of $\mathcal C$.
\end{corollary}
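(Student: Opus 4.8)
The plan is to establish the two defining properties of an attractor (Definition~\ref{attractor}) for $\mathcal A_1\cap\mathcal A_2$, reusing Lemmas~\ref{1} and~\ref{2} as black boxes. The general principle I would invoke is that the intersection of two attractors of $\mathcal C$ is again an attractor of $\mathcal C$; since this is a standard composition argument, the corollary reduces to checking that its hypotheses are indeed supplied by the two lemmas.

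For \textbf{closure}, the argument is immediate. Suppose $(\gamma_1,\gamma_2,\ldots)\in\mathcal E$ with $\gamma_1\in\mathcal A_1\cap\mathcal A_2$. Then $\gamma_1\in\mathcal A_1$, so closure of $\mathcal A_1$ (Lemma~\ref{1}) gives $\gamma_i\in\mathcal A_1$ for all $i$; likewise $\gamma_1\in\mathcal A_2$ and closure of $\mathcal A_2$ (Lemma~\ref{2}) gives $\gamma_i\in\mathcal A_2$ for all $i$. Hence $\gamma_i\in\mathcal A_1\cap\mathcal A_2$ for every $i$, and closure holds.

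For \textbf{convergence}, I would chain the two convergence results, using the closure of $\mathcal A_1$ to prevent the second phase from undoing the first. Fix a computation $(\gamma_1,\gamma_2,\ldots)\in\mathcal E_{\mathcal C}$. By convergence of $\mathcal A_1$ there is an index $p$ with $\gamma_p\in\mathcal A_1$, and by closure of $\mathcal A_1$ every $\gamma_q$ with $q\geq p$ remains in $\mathcal A_1$. Now consider the suffix $(\gamma_p,\gamma_{p+1},\ldots)$: since each step $\gamma_q\vdash\gamma_{q+1}$ is a legitimate single step and $\gamma_p\in\mathcal C$, this suffix is itself a computation in $\mathcal E_{\gamma_p}\subset\mathcal E_{\mathcal C}$. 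Applying convergence of $\mathcal A_2$, which by Lemma~\ref{2} is an attractor of the whole of $\mathcal C$, to this suffix yields an index $r\geq p$ with $\gamma_r\in\mathcal A_2$. Because $r\geq p$ we also have $\gamma_r\in\mathcal A_1$, hence $\gamma_r\in\mathcal A_1\cap\mathcal A_2$, giving the required convergence.

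The only point that needs care — and which I regard as the crux rather than a genuine obstacle — is that the second convergence is applied to a \emph{suffix} of the original computation. This is legitimate precisely because computations are closed under taking suffixes and because Lemma~\ref{2} asserts that $\mathcal A_2$ attracts all of $\mathcal C$ (not merely $\mathcal A_1$), so no extra check that the suffix starts inside a smaller set is needed. Symmetrically, one could reach $\mathcal A_2$ first and then $\mathcal A_1$, using closure of $\mathcal A_2$; the order is immaterial, so I would simply present the version above.
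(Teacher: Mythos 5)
Your proof is correct and is exactly the standard composition argument that the paper leaves implicit: the corollary is stated there without proof, as an immediate consequence of Lemmas~\ref{1} and~\ref{2}. Your write-up supplies the details faithfully — closure of the intersection from closure of each attractor, and convergence by reaching $\mathcal A_1$ first, then applying convergence of $\mathcal A_2$ to the suffix while closure of $\mathcal A_1$ keeps the computation inside $\mathcal A_1$ — so nothing is missing.
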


\subsubsection{No visited node can create a token --- Reloading wave and synchronicity}

To verify this property, we need synchronicity assumptions: all nodes timers must be decremented at most once in the time it takes to a token to be received, treated, and sent again. We also need the cover property to be true, so independent random numbers generators on the nodes.

We consider an arbitrary configuration $\gamma_0\in \mathcal A_1\cap \mathcal A_2$. All the following properties are about $\mathcal E_{\gamma_0}=(\gamma_0, \gamma_1, \ldots)$. We consider a configuration $\gamma_i$ in $\mathcal E_{\gamma_0}$, and a token $t$ in $\gamma_i$. We consider the set $A_i(t)$ of all nodes that have received the token $t$ since $\gamma_0$: $A_{0}(t)=\emptyset$.

\begin{lemma}\label{tree}
$\restriction{t^{(\gamma_i)}.tab}{A_i(t)}$ represents a spanning tree of $(A_i(t), E\cap A_i(t)^2)$.
\end{lemma}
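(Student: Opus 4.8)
The plan is to prove that $\restriction{t^{(\gamma_i)}.tab}{A_i(t)}$ is a spanning tree of $(A_i(t), E\cap A_i(t)^2)$ by induction on $i$, tracking how the set $A_i(t)$ of visited nodes and the restricted table evolve as the token $t$ makes its random walk. The base case is $A_0(t)=\emptyset$, where the empty restriction is vacuously a (degenerate) spanning tree of the empty graph. For the inductive step, I would assume the restriction is a spanning tree of the subgraph induced by $A_i(t)$ and show the property is preserved across the single step $\gamma_i\to\gamma_{i+1}$. The key observation is that $A_i(t)$ only changes when node $i$ actually \emph{receives} the token $t$ (in the sense of the definition just given: $t$ is on an incoming edge at $\gamma_i$ and $t'$ on an outgoing edge at $\gamma_{i+1}$); in all other steps both $A_i(t)$ and $\restriction{t^{(\gamma_i)}.tab}{A_i(t)}$ are unchanged, so the property trivially persists.

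\textbf{The inductive step.}
When the token $t$, emitted by some node $k$, is received by a node $v$ at step $\gamma_i\to\gamma_{i+1}$, there are two cases according to whether $v$ has been visited before. Following the update rules \ref{tablejeton} (equivalently \textbf{R1.a}), the receiving node $v$ sets $t'.table[v]=v$ and $t'.table[k]=v$, leaving all other entries untouched (up to the merge, which only fills in $\bot$ entries from other tokens and does not affect the entries on $A_i(t)$ for this single circulating token). If $v\in A_i(t)$ already, then $A_{i+1}(t)=A_i(t)$, and the effect on the restricted table is to reroute $v$'s own entry to itself and to set $k$'s father to $v$; I would argue this transforms one spanning tree into another (the standard ``reversing the path to the root'' operation: since $k$ was the previous root-side endpoint, re-rooting at $v$ keeps the structure a tree on the same vertex set). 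If $v\notin A_i(t)$, then $A_{i+1}(t)=A_i(t)\cup\{v\}$, and note the emitter $k$ necessarily lies in $A_i(t)$ (the token came from $k$, which had it, hence had been visited), with $(k,v)\in E$ since $v$ is a neighbor of $k$. The new entry $t'.table[v]=v$ would make $v$ the root and $t'.table[k]=v$ attaches the old tree to $v$ via the real edge $(k,v)$, so the restriction on $A_{i+1}(t)$ is again connected, acyclic, and spans — a spanning tree.

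\textbf{Where correctness and the edge condition enter.}
Two ingredients need care. First, I must confirm every edge $(w,t'.table[w])$ appearing in the restricted table is a genuine edge of $E\cap A_i(t)^2$; this is exactly where I would invoke Lemma \ref{1}: once $t$ is correct, every non-$\bot$ entry corresponds to a real edge of $G$, and since both endpoints of any entry touched along the walk lie in the visited set, they lie in $A_i(t)^2$. I should therefore state the lemma for tokens already in the attractor $\mathcal A_1$, or else carry the correctness as an additional invariant. Second, I must check that the merge step \textbf{R1.b} does not corrupt the restriction: merging only copies entries where $t'.table[k]=\bot$, and for a single tracked token all nodes in $A_i(t)$ already have defined, tree-consistent entries, so the merge adds nothing on $A_i(t)$.

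\textbf{The main obstacle.}
I expect the delicate point to be the re-rooting argument in the case $v\in A_i(t)$: the update overwrites $t'.table[v]$ and $t'.table[k]$ simultaneously, and I must verify that this never creates a cycle or disconnects the tree on $A_i(t)$. The clean way is to observe that in a functional ``father'' representation of a tree rooted at the current holder, the token's arrival at $v$ coming from $k$ corresponds precisely to re-rooting the tree at $v$ and flipping the single edge $k\to v$; since $v$ was already in the tree, reachable from the old root, flipping the orientation of the path does not change the underlying undirected edge set except at the two modified entries, and connectivity and acyclicity are preserved. Making this rigorous requires being explicit about what the ``father'' function encodes as an undirected tree and checking the edge set changes only in the controlled way described — this bookkeeping, rather than any deep idea, is the crux.
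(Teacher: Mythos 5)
Your proof takes essentially the same route as the paper's: an induction over execution steps in which only \textbf{R1} receptions matter, the same two-case split on whether the receiving node already lies in $A_i(t)$, the same appeal to token correctness (Lemma \ref{1}) to guarantee every table entry is a genuine edge, and the same observation that the merge of rule \textbf{R1.b} only fills in $\bot$ entries and so leaves the restriction to $A_i(t)$ untouched. If anything, you are more explicit than the paper on the delicate point — the paper simply asserts that ``setting $i$ as the root of the tree and the father of $j$, while leaving the remaining of the tree unchanged, gives a tree,'' whereas you spell out the re-rooting/edge-flip argument that justifies it — so the proposal is correct and matches the published argument.
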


\begin{proof}
Obviously, $\restriction{t^{(\gamma_0)}.tab}{A_{0}(t)}=\emptyset$ is a spanning tree of $A_{0}(t)=\emptyset$.

The application of \textbf{R3} and \textbf{R4} entails no change on either $A_i$ or the token messages. Thus, we only consider the application of \textbf{R1} and \textbf{R2}

Consider a step $t^{(\gamma)}\rightarrow t^{(\gamma')}$ at which a node $i$ receives the token $t$ from $j$, and suppose that $\restriction{t^{(\gamma)}.tab}{A_{i(t)}}$ is a spanning tree of $A_i(t)$.

Then at the next step, $t^{(i+1)}.tab[j]=i$, $t^{(\gamma')}.tab[i]=i$ and $t^{(\gamma')}.tab[k]=t^{(\gamma)}.tab[k]$ for any other $k$ in $A_i(t)$. Since $\restriction{t^{(\gamma)}.tab}{A_i(t)}$ represents a spanning tree of $(A_i(t), V\cap A_i(t)^2)$, for any $k\neq i, j$, $(k, t^{(\gamma')}.tab[k])=(k, t^{(\gamma)}.tab[k])$ is an edge of $(A_i(t), V\cap A_i(t)^2)$ and $internal\_test$ will not remove $k$ from this array. Since $i$ has received the token from j, $(j, t^{(\gamma')}.tab[j])=(j, i)$ is an edge of $(A_{i+1}(t), V\cap A_{i+1}(t)^2)$.

If several tokens are pending reception by $j$, they may be merged: $t_1^{(\gamma)}\rightarrow t^{(\gamma')}$ and $t_2^{(\gamma)}\rightarrow t^{(\gamma')}$. Then, since $t_1$ and $t_2$ are correct, $t$ is also correct, and thus, $t^{(\gamma)}.tab$ is a spanning tree of $(A_i(t), E\cap A_i(t)^2)$.

Now two case can occur: either $i$ is in $A_i(t)$, or not. In both cases, setting $i$ as the root of the tree and the father of $j$, while leaving the remaining of the tree unchanged, gives a tree.

Thus, $\restriction{t^{(\gamma)}.tab}{A_{i}(t)}$ represents a spanning tree of $(A_i(t), E\cap A_i(t)^2)$.
\end{proof}

A node may belong to several spanning tree, if it has been visited by several tokens.

\begin{lemma}
The propagation of a reloading wave takes at most $n$ time units.
\end{lemma}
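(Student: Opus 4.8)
The plan is to bound the number of time units required for a reloading wave to propagate from the token holder to all nodes of the current reloading tree, by showing that the wave descends the tree one level per time unit and that the tree has depth at most $n-1$.

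First I would recall the structure of the reloading wave. By Lemma \ref{tree}, when node $i$ launches the wave (rule \textbf{R1.c}), the table $t'.table$ restricted to $A_i(t)$ represents a spanning tree rooted at $i$: each node $k$ has father $t'.table[k]$, and $i$ is its own father (the root). The wave is initiated by $i$ sending a $reload$ message to each child $j$ (those with $t'.table[j]=i\wedge j\in N_i$), and rule \textbf{R3.a} has each receiving node forward the message to its own children before the table entry $table[i]$ is set to $\bot$. So the wave is exactly a parallel top-down broadcast along the tree edges.

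The key steps are then: (1) observe that a $reload$ message sits on one edge and is received, treated, and re-emitted in one time unit (the time unit defined in Section \ref{MDS}), so the wave advances exactly one tree level per time unit; (2) argue by induction on the tree depth $d$ that after $d$ time units every node at depth $d$ from the root has received and processed the wave; (3) bound the depth of the tree. Since the tree spans a subset $A_i(t)\subseteq V$ with $|V|=n$, and a rooted tree on at most $n$ vertices has depth at most $n-1$, the wave reaches every node within $n-1\leq n$ time units. I would also note that the $table[i]\rcp\bot$ assignment in \textbf{R3.a} guarantees the wave cannot loop back and re-traverse an edge (as already used in the proof of Lemma \ref{2}), so no node processes the wave more than once and the process genuinely terminates within the depth bound.

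The main obstacle, and the only subtle point, is justifying that the relevant tree really has depth at most $n$ and that the synchronicity convention makes each level cost exactly one time unit. In the worst case the reloading tree is a simple path of length $n-1$, which is precisely what forces the bound to $n$ rather than something smaller like the diameter; I would make this extremal case explicit to justify the constant. I would also be careful that while the wave is in transit the token continues its own random walk and may keep modifying $t.table$, but this does not affect already-dispatched $reload$ messages, since each such message carries its own copy of $table$ (see item \ref{vague} in the configuration description); thus the propagation is analyzed against a frozen snapshot of the tree taken at launch time, and the $n$ time-unit bound holds independently of concurrent token activity.
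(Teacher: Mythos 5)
Your proof is correct and follows essentially the same route as the paper's: the wave is a top-down broadcast on the tree carried by the token, the tree's height is bounded by the number of nodes, and each level costs one time unit by the definition of the time unit in Section \ref{MDS}; the paper states exactly this, only far more tersely (bounding the height by $\mathcal N$ rather than $n-1$). One immaterial slip: in rule \textbf{R3.a} the assignment $table[i]\leftarrow\bot$ is executed \emph{before} the forwarding, not after, but this does not affect your argument since it changes neither the set of children nor the level-per-time-unit count.
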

\begin{proof}
A time unit is the time taken by message sent to be received and treated. Now, the reloading wave is broadcast on a tree, of height at most $\mathcal N$. Thus, this propagation takes at most $\mathcal N$ time units.
\end{proof}

\begin{theorem}\label{nocreat}
A node in $A_i(t)$ cannot create a token.
\end{theorem}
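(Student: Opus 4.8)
The plan is to show that once a node $i$ has been visited by token $t$ (i.e.\ $i\in A_i(t)$), its timer can never reach $0$ before being reset by the reloading wave associated with $t$, so that rule \textbf{R2} never fires at $i$. The essential idea is a timing argument: the reloading wave is launched often enough, and propagates fast enough along the tree described by Lemma~\ref{tree}, that every node of $A_i(t)$ is refreshed to $T_m$ within each window of $T_m$ time units.

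First I would fix the relevant time accounting. By the previous lemma, a reloading wave propagates in at most $\mathcal N$ (here $n$) time units. By the launch condition in \textbf{R1.c}, the token holder triggers a wave as soon as $t.hop\geq T_m-(n+1)$, and then resets $t.hop$ to $0$. Since $t.hop$ increases by exactly one at each token hop (rule \textbf{R1}, item~\ref{countjeton}), and one hop takes one time unit, the elapsed time between two successive wave launches is at most $T_m-(n+1)$ time units. Adding the at most $n$ time units for the wave to reach the farthest node of the tree, and accounting for the one extra time unit of slack, the total time between two consecutive refreshes received by any node of $A_i(t)$ is at most $T_m$ time units. The synchronicity hypothesis stated just before the theorem — each timer is decremented at most once per token reception/treatment/emission cycle, i.e.\ at most once per time unit — guarantees that a timer initialised at $T_m$ (rule \textbf{R1.e}, item~\ref{timerjeton}, or \textbf{R3.b}, item~\ref{timervague}) cannot be decremented to $0$ within strictly fewer than $T_m$ time units.

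The key structural step is Lemma~\ref{tree}: $\restriction{t.tab}{A_i(t)}$ is a spanning tree of $A_i(t)$ rooted at the current token holder. This is what guarantees that when the wave is launched from the root and forwarded by each node to its children (rules \textbf{R1.c} and \textbf{R3.a}), it actually reaches \emph{every} node of $A_i(t)$, and hence resets every such timer to $T_m$. I would argue by induction on the configurations of $\mathcal E_{\gamma_0}$, maintaining the invariant that for every $j\in A_i(t)$, $timer_j$ is reset to $T_m$ at least once in every interval of at most $T_m$ time units, so that it never reaches $0$; by the impossibility of \textbf{R2} firing with a strictly positive timer (rule \textbf{R2} requires $timer_i=0$), no node of $A_i(t)$ creates a token.

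The main obstacle, and the point requiring the most care, is the \emph{simultaneity} of the wave launch and the tree validity. The tree $t.tab$ is adaptive: it changes as the token moves, so the wave is broadcast on a snapshot of the tree taken at launch, while the token keeps moving and newly visited nodes join $A_i(t)$ afterwards. I would need to argue that a node newly added to $A_i(t)$ has, at the moment it is visited, just had its timer reset to $T_m$ by \textbf{R1.e}, so the clock for its first required refresh starts precisely then, and that the next wave will reach it since it is now part of the tree. Reconciling the moving root, the in-transit wave messages, and the precise $T_m-(n+1)$ versus $n$ budget — to confirm the worst-case gap between two refreshes of a fixed node is genuinely bounded by $T_m$ and not merely by $T_m$ plus lower-order terms — is where the delicate bookkeeping lies.
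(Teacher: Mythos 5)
Your proposal is correct and follows essentially the same route as the paper's own proof: it combines Lemma \ref{tree} with the $n$-time-unit propagation lemma, uses the launch condition $t.hop \geq T_m-(n+1)$ to bound the gap between wave launches, and concludes by timer arithmetic (any node of $A_i(t)$ had its timer reset within the last $T_m-(n+1)$ units, hence holds at least $n+1$ at wave launch and at least $1$ when the wave arrives), which is exactly the paper's case analysis of \emph{``already received a wave''} versus \emph{``received the token since the last wave.''} The "delicate bookkeeping" you flag at the end is resolved in the paper precisely by that one unit of slack, so nothing further is missing.
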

\begin{proof}
Each time the token counter reaches $T_m-(n+1)$, a wave is propagated. The two lemma above guarantee that this wave hits any node in $A_i(t)$ in at most $n$ time units. Now, since it is in $A_i$, either this node has already received a reloading wave message, or it has received a token since the last wave was propagated. In both cases, it has reset its timeout to $T_m$ since the last wave initiation, \emph{ie} during the last $T_m-(n+1)$ time units. Thus, this timeout, at the initiation of the wave, is at least at $n+1$. Then, when the wave reaches the node, its timeout is $\geq1$, which makes it impossible for it to create a token between to successive waves, or between a token visit and the subsequent wave. Finally, no node in $A_i$ can create a token.
\end{proof}

Note that the cover property ensures that eventually, $A_i(t)=V$ \emph{whp}, so that:

\begin{lemma}\label{3}
$\mathcal A_3=\{\gamma\in \mathcal A_1\cap\mathcal A_2/\bigcup_i A_i(t)=V\}$ is a probabilistic attractor of $\mathcal A_1\cap\mathcal A_2$. In $\mathcal A_3$, rule \textbf{R2} can never be applied.
\end{lemma}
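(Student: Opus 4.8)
The plan is to check the two defining properties of a probabilistic attractor from Definition~\ref{probattractor} --- convergence \emph{whp} and deterministic closure --- and then to read off the statement about \textbf{R2} from Theorem~\ref{nocreat}, so that almost all the work has already been done in the preceding lemmas.

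First I would establish convergence. Fix a computation $(\gamma_0, \gamma_1, \ldots)\in\mathcal E_{\mathcal A_1\cap\mathcal A_2}$ and a token $t$ present in $\gamma_0$, which exists since $\gamma_0\in\mathcal A_2$. Following the token-state convention, $t$ keeps a well-defined identity along the computation: even when $t$ is merged at an application of \textbf{R1}, it ``becomes'' the surviving token $t'$, so $t$ never disappears and its sequence of host nodes is exactly a random walk on $G$ (each forwarding in \textbf{R1}.d chooses a uniform neighbor independently). By the cover property of Corollary~\ref{RWProp}, this walk visits every node of $V$ in finite expected time, hence \emph{whp} there is an index $i$ with $A_i(t)=V$, i.e. $\bigcup_i A_i(t)=V$. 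Since $\mathcal A_1\cap\mathcal A_2$ is closed (the corollary following Lemma~\ref{2}), the configuration $\gamma_i$ still lies in $\mathcal A_1\cap\mathcal A_2$, so $\gamma_i\in\mathcal A_3$, which is the required convergence.

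Next I would prove closure, which is purely deterministic. The key observation is that for a fixed $t$ the family $A_i(t)$ is non-decreasing in $i$: a node enters $A_i(t)$ when it receives $t$ and is never removed, while \textbf{R3} and \textbf{R4} leave $A_i(t)$ untouched and \textbf{R2} only affects other (freshly created) tokens. Consequently $\bigcup_i A_i(t)$ is a monotone limit, so once $A_i(t)=V$ holds it holds at every later configuration; combined with the closure of $\mathcal A_1\cap\mathcal A_2$ this shows that any computation started in $\mathcal A_3$ stays in $\mathcal A_3$. For the final claim that \textbf{R2} can never be applied in $\mathcal A_3$, I would invoke Theorem~\ref{nocreat}: in any $\gamma\in\mathcal A_3$ every node belongs to $A_i(t)$, and such a node cannot create a token, equivalently its timer is kept $\geq 1$ by the reloading-wave mechanism so the guard $timer=0$ of \textbf{R2} is never enabled.

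The main obstacle I anticipate is in the convergence step: justifying cleanly that the cover property, stated for a single random walk, applies to the tracked token despite merges and despite the presence of other tokens. One must argue that the host-node sequence of $t$ is genuinely a random walk, that merging only enlarges $A_i(t)$ and cannot interrupt this walk, and that the token never stalls (no rule consumes $t$ without re-emitting it). A secondary subtlety, which I would flag explicitly, is the slight abuse of notation whereby $\mathcal A_3$ is described by the computation-level predicate $\bigcup_i A_i(t)=V$ rather than a purely configuration-local one; the monotonicity established for closure is exactly what makes this predicate stable and hence a legitimate attractor condition. Everything beyond these points is bookkeeping already carried out in Lemmas~\ref{1}--\ref{2} and Theorem~\ref{nocreat}.
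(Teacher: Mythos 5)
Your proposal is correct and follows exactly the route the paper intends: the paper gives no explicit proof of this lemma, deriving it in one line from the remark that the cover property makes $A_i(t)=V$ hold eventually \emph{whp}, together with Theorem~\ref{nocreat} for the claim about \textbf{R2}. Your write-up simply makes explicit the details the paper leaves implicit (token identity through merges, monotonicity of $A_i(t)$ for closure, and closure of $\mathcal A_1\cap\mathcal A_2$), so it is the same argument, carried out more carefully.
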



\subsubsection{There is eventually exactly one token}

The key assumption to verify this is that the meeting property of random walks hold: independent random numbers generators are needed. Also, when several tokens are headed to a same node, this node has to be able to detect it with probability $>0$, which is the case if the local treatment time is not negligible before the transmission time, or if messages are buffered for some non-negligible time before being treated.

\begin{definition}\label{LC}
A \emph{legitimate configuration} is a configuration with a single token $t$, the table of which represents a spanning tree of the system, and in which all nodes are hit by a reloading wave before their timers reach the value $0$.
$$\mathcal{LC} =\{\gamma \in \mathcal A_3/|Token_{\gamma}|=1\}$$
\end{definition}

The following theorem proves that $\LC$ matches the specification of $RandTokCirc$.

\begin{theorem}
A configuration $\gamma$ of $\LC$ is such that any execution $(\gamma_0=\gamma, \gamma_1, \gamma_2, \ldots)$ starting at $\gamma$ verifies
\begin{itemize}
\item $\forall k, |Token_{\gamma_k}|=1$;
\item $\forall k, \forall i, \exists l>k, |Token_{\gamma_{l}}(i)|=1$ \emph{whp}.
\end{itemize}
\end{theorem}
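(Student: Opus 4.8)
The plan is to prove the two bullet points separately, starting from a legitimate configuration $\gamma \in \mathcal{LC}$ and showing that both properties are preserved along any execution $(\gamma_0 = \gamma, \gamma_1, \ldots)$. The first bullet (exactly one token forever) is essentially a closure argument; the second (infinitely often visited) follows from the random walk cover property together with the fact that rule \textbf{R2} is disabled.

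For the first bullet, I would argue by induction on $k$ that $|Token_{\gamma_k}| = 1$. By Definition \ref{LC}, $\mathcal{LC} \subset \mathcal{A}_3$, and Lemma \ref{3} tells us that in $\mathcal{A}_3$ rule \textbf{R2} is never applied. Thus no new token is ever created. It then suffices to review the effect of each enabled rule on $|Token_{\gamma_k}|$: rules \textbf{R3} and \textbf{R4} leave $Token$ unchanged, and rule \textbf{R1}, applied to the unique token, consumes it (removing $T$) and emits exactly one token $t'$ at step \textbf{R1.d}, so $|Token_{\gamma_{k+1}}| = |Token_{\gamma_k}| = 1$. Since $\mathcal{A}_3$ is closed (it is a probabilistic attractor with deterministic closure, per Lemma \ref{3}) and \textbf{R2} stays disabled, the count remains exactly $1$ for all $k$. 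This also re-confirms that $\gamma_k$ stays in $\mathcal{LC}$, which I would note explicitly to justify applying the invariants of $\mathcal{A}_3$ at every step.

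For the second bullet, once we know there is a single token $t$ circulating forever and that \textbf{R2} is never triggered, the token's sequence of holders is exactly a random walk on $G$ (each \textbf{R1.d} forwards it to a uniformly random neighbor). I would invoke Corollary \ref{RWProp}, which establishes that random walks verify the hitting and cover properties on our graph, to conclude that for any configuration index $k$ and any node $i$, the token will \emph{whp} be forwarded to (an edge heading toward) $i$ at some later configuration $\gamma_l$ with $l > k$. At that configuration $|Token_{\gamma_l}(i)| = 1$, which is precisely the required statement. Because this holds for every $k$, the token visits every node infinitely often \emph{whp}.

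The main obstacle, and the step deserving the most care, is the closure argument for the token count: I must be certain that rule \textbf{R2} genuinely cannot fire anywhere along the execution, which relies on Theorem \ref{nocreat} and Lemma \ref{3}, i.e. on the reloading-wave invariant that every node's timer is reset (via either a token visit or a wave) before it reaches $0$. This invariant is what guarantees $\gamma_k$ remains in $\mathcal{A}_3$ rather than slipping out; I would therefore verify that the timer-reset guarantee is preserved by the single-token dynamics (the wave is launched whenever $t.hop$ reaches its threshold and reaches every node of the spanning tree within $n$ time units, per the lemmas preceding Theorem \ref{nocreat}). The second bullet is comparatively routine once the first is secured, since it is a direct consequence of the cover property already proved in Corollary \ref{RWProp}.
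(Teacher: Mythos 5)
Your proof is correct and follows essentially the same route as the paper: the paper establishes the first bullet by appealing to the closure property of $\LC$ (proved via the same rule-by-rule case analysis you give, with \textbf{R2} disabled by Lemma \ref{3}/Theorem \ref{nocreat}, deferred to the proof of Lemma \ref{4}), and the second bullet by the hitting property of the random walk, exactly as you do. The only difference is presentational — you inline the closure argument rather than forward-referencing it — which if anything makes the proof more self-contained.
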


\begin{proof}
The closure property of $\LC$, that will be proved in the sequel, proves the first item. The second item comes from the fact that the successive positions of the random walk constitute a random walk, and thus verifies the hitting property.
\end{proof}

\begin{lemma}\label{4}
$\mathcal{LC}$ is a probabilistic attractor of $\mathcal{A}_3$
\end{lemma}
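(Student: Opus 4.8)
The plan is to establish the two defining properties of a probabilistic attractor from Definition \ref{probattractor}: \emph{closure} (from any $\gamma\in\mathcal{LC}$ the computation stays in $\mathcal{LC}$, deterministically) and \emph{convergence} (from any $\gamma\in\mathcal A_3$ the computation reaches $\mathcal{LC}$ \emph{whp}). I would treat closure first, since it is deterministic, and then convergence, which is where randomness enters.

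For closure, fix $\gamma_0\in\mathcal{LC}$, so $\gamma_0\in\mathcal A_3$ and $|Token_{\gamma_0}|=1$. By Lemma \ref{3}, rule \textbf{R2} is never enabled in $\mathcal A_3$, so no token is ever created along the computation. Among the remaining rules, \textbf{R3} and \textbf{R4} leave $Token_\gamma$ unchanged, and \textbf{R1} applied to the single pending token neither creates nor (having nothing to merge with) removes a token: it merely forwards it at step \textbf{R1.d}. Hence $|Token_{\gamma_k}|=1$ for all $k$. It then remains to check that the computation does not leave $\mathcal A_3$. This follows from monotonicity and earlier closure results: $\mathcal A_1$ and $\mathcal A_2$ are attractors of $\mathcal C$ (Lemmas \ref{1} and \ref{2}) and are therefore closed, while the visited-set condition $\bigcup_i A_i(t)=V$ is preserved because the sets $A_i(t)$ only grow along a computation. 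Thus $\gamma_k\in\mathcal A_3$ with $|Token_{\gamma_k}|=1$, i.e. $\gamma_k\in\mathcal{LC}$, for every $k$.

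For convergence, fix $\gamma_0\in\mathcal A_3$. First I would argue that the token count $|Token_\gamma|$ is non-increasing along any computation starting in $\mathcal A_3$: again \textbf{R2} is disabled by Lemma \ref{3} so no token is created, \textbf{R3} and \textbf{R4} do not touch $Token_\gamma$, and \textbf{R1} applied to a set $T$ of simultaneously received tokens replaces $|T|$ tokens by a single merged one (Procedure \ref{merge}, step \textbf{R1.b}), decreasing the count by $|T|-1\geq0$. Next I would show that whenever $|Token_\gamma|\geq 2$ the count strictly decreases \emph{whp}. The successive positions of each token form a random walk, so by Corollary \ref{RWProp} the meeting property holds: two of the tokens are routed to a common node in finite expected time, the probability of never meeting being $0$. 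By the detection assumption stated before Definition \ref{LC} (non-negligible local treatment or buffering time), whenever two tokens are headed to the same node they are, with probability bounded below by some $p>0$, received together in the same set $T$ and merged by \textbf{R1.b}, strictly decreasing the count. Since such meeting opportunities recur indefinitely \emph{whp}, a merge eventually occurs \emph{whp}; iterating this at most $|Token_{\gamma_0}|-1$ times brings the count down to exactly $1$ \emph{whp}. At that moment the configuration is still in $\mathcal A_3$ (by the monotonicity argument above) and holds a single token, hence lies in $\mathcal{LC}$.

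The main obstacle is the passage, in the message-passing model, from the classical meeting property (two random walks occupying the same vertex) to an actual merge. Two tokens may cross a node at slightly different instants and thus never sit in its queue together, so a bare meeting does not by itself trigger \textbf{R1.b}. The crux is to use the detection hypothesis to lower-bound by a positive constant the probability that a meeting results in both tokens belonging to the same reception set $T$, and then to combine the infinitely many recurring meeting opportunities into a \emph{whp} statement that a merge eventually occurs. This Borel--Cantelli-style argument, rather than any algebraic computation, is the delicate part of the proof.
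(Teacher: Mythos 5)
Your proof is correct and follows essentially the same route as the paper's: closure via a rule-by-rule analysis showing the token count is invariant (using Lemma \ref{3} to disable \textbf{R2}), and convergence via the meeting property combined with the positive-probability merge under the treatment-time assumption. Your version is in fact somewhat more careful than the paper's — notably in explicitly checking that the computation stays in $\mathcal A_3$ and in flagging the gap between two walks meeting at a vertex and the tokens actually being received in the same set $T$, a point the paper passes over quickly.
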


\begin{proof}
Consider a step $\gamma\rightarrow\gamma'$.

If $\gamma\rightarrow^{\mathrm{\mathbf{R3}}}\gamma'$ or  $\gamma\rightarrow^{\mathrm{\mathbf{R4}}}\gamma'$, $Token_{\gamma'}=Token_{\gamma}$. Now, according to theorem \ref{nocreat}, \textbf{R2} cannot be activated.

If $\gamma\rightarrow^{\mathrm{\mathbf{R1}}}\gamma'$, $Token_{\gamma'}=Token_\gamma\backslash T\cup\{t\}$, with $|T|\geq1$.

Thus, $1\leq|Token_{\gamma'}|\leq|Token_{\gamma}|$ (this is greater than 1 according to attractor $\mathcal A2$), which ensures closure of $\mathcal{LC}$. Now, if $|Token_\gamma|>1$, meeting property of random walks ensure that at some configuration $\gamma'\in E_\gamma$, several tokens are headed toward a same node. Then, if the treatment time is not negligible before the transmission time, another token is received with probability $>0$ during the treatment of the first token, and those token are merged. Thus, \emph{whp}, there is some configuration $\gamma''\in E_\gamma$ such that $|Token_{\gamma''}|<|Token_\gamma|$. 

Finally, eventually, a configuration $\delta$ is reached with $|Token_\delta|=1$.
\end{proof}

From Lemmas \ref{1}, \ref{2}, and \ref{4} 

\begin{theorem}[Convergence and closure]
The Algorithm, starting in an arbitrary configuration, converges to a configuration satisfying $\mathcal{LC}$ \emph{whp}.
\end{theorem}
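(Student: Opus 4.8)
The plan is to derive the theorem as a composition of the attractor results already proved, exploiting the transitivity of attraction. The cornerstone I would establish first is the following general fact: if $A\subseteq B\subseteq\mathcal{C}$, $A$ is a (probabilistic) attractor of $B$, and $B$ is a (probabilistic) attractor of $\mathcal{C}$, then $A$ is a (probabilistic) attractor of $\mathcal{C}$. Closure is immediate, since the closure of $A$ as an attractor of $B$ already says that any computation starting in $A$ stays in $A$, independently of the ambient set. For convergence, take any $(\gamma_1,\gamma_2,\ldots)\in\mathcal{E}_{\mathcal{C}}$; by convergence of $B$ it reaches $B$ at some index $j$ (\emph{whp}, when $B$ is only a probabilistic attractor). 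The suffix $(\gamma_j,\gamma_{j+1},\ldots)$ belongs to $\mathcal{E}_{\gamma_j}\subseteq\mathcal{E}_{B}$, so by convergence of $A$ it reaches $A$ (\emph{whp}). Hence the whole computation reaches $A$, which is convergence of $A$ as an attractor of $\mathcal{C}$.

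Granting transitivity, the proof is a three-link chain. The corollary states that $\mathcal{A}_1\cap\mathcal{A}_2$ is an attractor of $\mathcal{C}$ (deterministic); Lemma \ref{3} states that $\mathcal{A}_3$ is a probabilistic attractor of $\mathcal{A}_1\cap\mathcal{A}_2$; and Lemma \ref{4} states that $\mathcal{LC}$ is a probabilistic attractor of $\mathcal{A}_3$. Since $\mathcal{LC}\subseteq\mathcal{A}_3\subseteq\mathcal{A}_1\cap\mathcal{A}_2\subseteq\mathcal{C}$, applying transitivity twice (first to collapse $\mathcal{LC}\subseteq\mathcal{A}_3\subseteq\mathcal{A}_1\cap\mathcal{A}_2$, then to absorb $\mathcal{A}_1\cap\mathcal{A}_2\subseteq\mathcal{C}$) yields that $\mathcal{LC}$ is a probabilistic attractor of $\mathcal{C}$. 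The convergence part of this statement is exactly the claim of the theorem: starting from an arbitrary configuration, the system reaches $\mathcal{LC}$ \emph{whp}. The closure announced in the title is the deterministic closure of $\mathcal{LC}$ obtained inside the proof of Lemma \ref{4} (once $|Token_\gamma|=1$, rule \textbf{R2} is disabled by Theorem \ref{nocreat} and no rule can increase the number of tokens), so once $\mathcal{LC}$ is entered it is never left.

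The main obstacle I anticipate is making the transitivity rigorous in the probabilistic case, because \emph{whp} here means \emph{with probability} $1$ and the two convergence events must be composed at the random first-entry time $\tau$ into the intermediate set $B$. The clean way is a (strong) Markov argument: conditioned on $\tau<\infty$ and on the value $\gamma_\tau$, the continuation is distributed as a computation of $\mathcal{E}_{\gamma_\tau}$, which reaches $A$ with probability $1$ by the inner convergence; integrating over $\gamma_\tau$ and using that the outer convergence gives $\tau<\infty$ with probability $1$ shows $A$ is reached with probability $1$. This relies on the fact, sketched in the opening remark, that the sequence of configurations is a homogeneous random process whose future depends only on the present configuration, so that suffixes of computations are themselves legitimate computations from their starting configuration. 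Everything else is bookkeeping: checking the set inclusions and recalling that the intersection of attractors is an attractor (the cited corollary) so that the chain is well-formed.
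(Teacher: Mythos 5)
Your proposal is correct and takes essentially the same route as the paper: the paper's entire proof of this theorem is the citation ``From Lemmas \ref{1}, \ref{2}, and \ref{4}'', i.e., precisely the chaining of attractor results that you carry out, with the transitivity of (probabilistic) attraction left implicit. If anything, your write-up is more complete, since you make the transitivity argument explicit (including the Markov composition of the two \emph{whp} events at the random entry time into the intermediate set) and correctly insert Lemma \ref{3} into the chain, a link the paper's citation omits.
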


\section{The impact of mobility}

The token circulation algorithm presented above is self-stabilizing. Thus, from any arbitrary configuration occurring because of a topological change, the algorithm eventually resumes its normal behavior if no further topological change occurs. If the time between two topological reconfigurations is greater than $K$ times the convergence time, then the system spends $\frac{K-1}K$ of the time in a correct configuration.

The token circulation itself is robust to topological changes, as shown in corollary \ref{RWProp}. However, we introduced mechanisms to ensure self-stabilization that can be affected by a topological change. Indeed, the reloading wave is based on a spanning tree computed in the course of the token circulation. This spanning tree can contain edges that have failed. In this case, the reloading wave cannot be propagated to all nodes. The timer of a node not receiving the reloading wave will then expire, leading to an undue token creation.

In this section, we study the probability that a topological change entails such an error. We also define a locally checkable criterion that ensures that no error occurs.

The only non-local topological information used by the algorithm is the spanning tree contained in the token and in the reloading wave messages. Thus, a topological modification has an impact only if it makes those trees inconsistent with the topology. The tree used in reloading wave messages is a subtree of the tree in the token at the time when the reloading wave is launched (algorithm \ref{Main}, rule R1.c).

Now, the tree in the token is updated each time the token hits a node (algorithm \ref{Main}, rule R1.a). After a topological change, a configuration is illegitimate if an edge that is in the token tree or in a reloading wave message is removed. This represents less than $2n-2$ edges in $m$. The walk of the token corrects the tree when the token hits the son of this edge in the tree. Thus, if no reloading wave is broadcast between the time at which the topological change occurs and the time at which the token hits this node, then the specification is met.

Thus, a single link disconnection has a probability $\frac{2n-2}m$ not to affect the algorithm. If the algorithm reaches an illegitimate configuration, it still has a probability $P[H_{ji}<T/2]$ to hit the son of the disconnected link before and correct the tree it contains before it launches a wave ($H_{ji}$ being the observed time, starting at node $j$ to reach node $j$). Thus, after a topological change, with probability $\frac{m-2n+2}mP[H_{ij}<T/2]$ (see the computations of this quantity in the next section), the algorithm continuously meets the specification.

An edge is in the tree if and only if it is the last link through which a node sent the token. If each node stores the link through which it sent the token last, the son in the tree of an link that has been disconnected can detect an illegitimate configuration. The configuration is illegitimate as long as the link through which a node sent the token is not present: from the link disconnection to the next visit of the token to the son of the link in the tree (see figure \ref{ar}: $i$ sends the token to $j$, that is its father until $i$ receives the token again).

Thus by replacing the statement in algorithm \ref{Main}:
\begin{algorithm}[H]
\begin{algorithmic}
\STATE Send $Token$ to $j$ chosen randomly in $N(i)$
\end{algorithmic}
\end{algorithm}

with:
\begin{algorithm}[H]
\begin{algorithmic}
\STATE Choose $j$ at random in $N(i)$
\STATE Send $Token$ to $j$
\STATE $father_i\rcp j$
\end{algorithmic}
\end{algorithm}

a wave propagation can be unsuccessful if and only if a node $i$ is such that $father_i\notin N_i$, which $i$ can detect.

\begin{figure}[H]
\begin{center}
\includegraphics[scale=0.45]{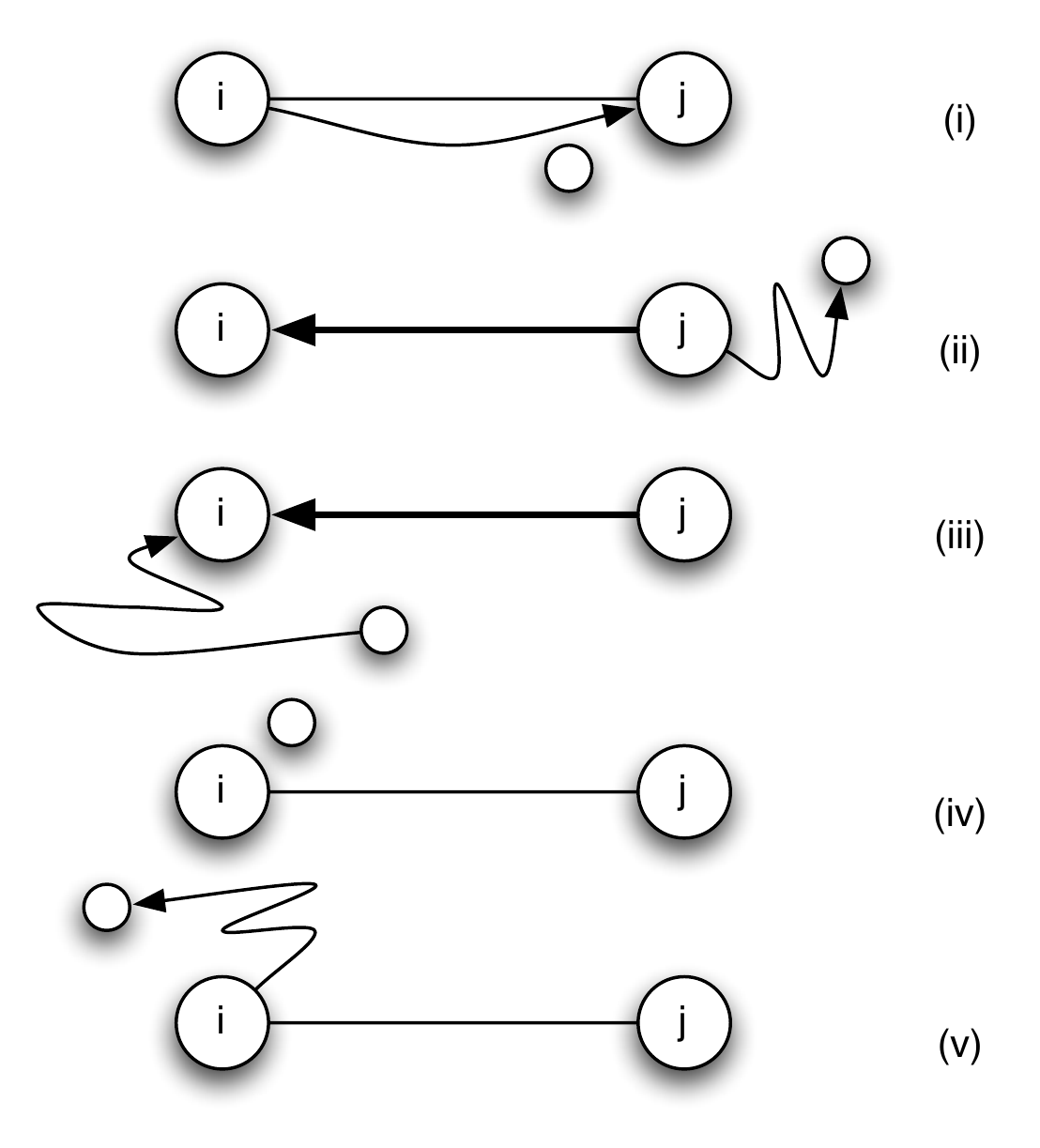}
\end{center}
\caption{Local detection of an illegitimate state}\label{ar} 
\end{figure}

Thus, we have:

\begin{property}
With probability $\frac{m-2n+2}mP[H_{ij}<T/2]$, after a link disconnection, the algorithm continuously respects the specification. If the system reaches an illegitimate configuration, a node in the system is aware of that.
\end{property}

\section{Timeout tuning}\label{Tt}

To solve the communication deadlock problem, the algorithm uses a decentralized timeout procedure: each processor indistinctly can produce a new token. To guarantee the stabilization property, a new mechanism, the reloading wave, is introduced. The role of this wave, periodically triggered,  is to prevent the creation of unnecessary tokens. 

Whatever the value proposed for $T_m$, as soon as this value is greater than $n$, the algorithm works correctly. But if $T_m$ is close to $n$, the reloading wave will be broadcasted too often, and if $T_m$ is too long, an absence of token will take a great amount of time before being corrected. We address in this section the problem to compute a good value for this timeout.

No bound can be given on the time a random walk takes to reach a given node (only results on expected times are available). However, as time goes by, it becomes improbable that the walk has not reached a node. 
 We first provide a probabilistic analysis of the waiting time. 
More precisely, we give a bound on the probability for a processor to 
wait for the token more than a certain amount of time. 
Then, we provide a criterion to decide a timeout value, based on the probability that the token is lost knowing that it has not been seen during a certain amount of time. This quantity depends on the probability that the token is lost during a transmission.

\subsection{Waiting times}

The waiting time is the average time a node is waiting for the token. It can be defined as the return time $h_{ii}$ ($=\frac{2m}{\deg(i)}$, see \cite{Lova93}), \emph{i.e.} the expected number of steps for the token, starting at node $i$, to return to node $i$ for the first time. 

It is interesting to measure the probability that a token has returned to a node after a given time. The probability that the token takes less than a given number of steps $t$ to come back to the node $i$ is defined by $P[H_{ii}\leq t]$ ($H_{ii}$ being the observed return time: $H_{ii}\geq t$ means that it has been more than $t$ steps since node $i$ has last seen the token). The following results give  a more accurate and comprehensive insight in the time a node will wait for the token after having released it.  In the sequel, we provide a bound on this value.

For the sake of simplicity, we will first study $P[H_{ii}\geq t]=1-P[H_{ii}\leq t+1]$.


\paragraph{Notation} Let  $\sigma [H_{ij}]$ denote the standard deviation of  $H_{ij}$ (the number of steps   to reach a node $j$ from $i$ for the first time), and $V[H_{ij}]$ the variance of $H_{ij}$.

The Chebyshev's inequality states that for any $\alpha$:
\begin{lemma}[Chebyshev's inequality]\label{Chebi}
$$P[H_{ii}\geq h_{ii}+ \alpha . \sigma [H_{ii}]  ]\leq \frac{1}{\alpha^2}$$
\end{lemma}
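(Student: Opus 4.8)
The plan is to recognize this as the classical Chebyshev inequality specialized to the return-time random variable $H_{ii}$, whose expectation is $h_{ii}$ and whose standard deviation is $\sigma[H_{ii}]$ (so that $V[H_{ii}]=\sigma[H_{ii}]^2$). The statement is the one-sided form, and I would derive it from the familiar two-sided bound together with a trivial event inclusion. The entire argument rests on Markov's inequality, so the only standing hypothesis I would flag is that $H_{ii}$ has finite variance; since $H_{ii}$ is a return time on a finite irreducible chain, this is guaranteed.

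First I would apply Markov's inequality to the nonnegative random variable $(H_{ii}-h_{ii})^2$. For any $a>0$ one has $P[(H_{ii}-h_{ii})^2\geq a]\leq E[(H_{ii}-h_{ii})^2]/a$, and the numerator is by definition $V[H_{ii}]=\sigma[H_{ii}]^2$. Choosing $a=\alpha^2\sigma[H_{ii}]^2$ gives
\begin{equation*}
P\left[(H_{ii}-h_{ii})^2\geq \alpha^2\sigma[H_{ii}]^2\right]\leq\frac{\sigma[H_{ii}]^2}{\alpha^2\sigma[H_{ii}]^2}=\frac1{\alpha^2}.
\end{equation*}
Since $(H_{ii}-h_{ii})^2\geq\alpha^2\sigma[H_{ii}]^2$ holds if and only if $|H_{ii}-h_{ii}|\geq\alpha\,\sigma[H_{ii}]$, this already yields the standard two-sided Chebyshev bound $P[|H_{ii}-h_{ii}|\geq\alpha\,\sigma[H_{ii}]]\leq1/\alpha^2$.

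To finish I would pass to the one-sided event stated in the lemma. The inclusion
\begin{equation*}
\{H_{ii}\geq h_{ii}+\alpha\,\sigma[H_{ii}]\}\subseteq\{\,|H_{ii}-h_{ii}|\geq\alpha\,\sigma[H_{ii}]\,\}
\end{equation*}
is immediate, so monotonicity of probability gives $P[H_{ii}\geq h_{ii}+\alpha\,\sigma[H_{ii}]]\leq1/\alpha^2$, which is exactly the claim.

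There is no genuine obstacle here: the result is textbook, and the only subtlety worth a sentence is that the bound as stated is the weaker one-sided statement obtained from the symmetric inequality by discarding the lower tail. If a sharper one-sided estimate were desired one could instead invoke Cantelli's inequality, which would replace $1/\alpha^2$ by $1/(1+\alpha^2)$; but since the paper only needs an upper bound on the waiting-time tail to tune the timeout $T_m$, the plain Chebyshev bound above suffices and keeps the argument self-contained.
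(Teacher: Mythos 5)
Your proposal is correct: the paper states this lemma as the classical Chebyshev inequality and provides no proof at all, so your derivation via Markov's inequality applied to $(H_{ii}-h_{ii})^2$, followed by the one-sided event inclusion, is exactly the standard argument the paper implicitly relies on. Your remark that finite variance is guaranteed (return times on a finite connected graph have finite moments) appropriately closes the only hypothesis the paper leaves unstated.
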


Thus we are led to compute the standard deviation of the hitting time. By definition, $\sigma [H_{ii}] = \sqrt{V[H_{ii}]}$ with $V[H_{ii}] = \mathbb{E}[(H_{ii} -h_{ii})^2]$. 

In the sequel, we present an algorithm to compute the variances of the return times on a graph, which is necessary to compute the Chebyshev bounds.

In order to compute the variance of the return time, we need to know the variances of all hitting times.
First, we state the following result:

\begin{lemma}[Variance of the number of steps to reach a node]\label{lemmevariance}
\begin{equation}\label{variance}
V[H_{ij}]+h_{ij}^2=\sum_{k\in\mathcal N(i)}p_{ik}(V[H_{kj}]+(h_{kj}+1)^2)
\end{equation}
\end{lemma}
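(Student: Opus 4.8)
The plan is to establish the recurrence by conditioning on the first step of the random walk, exactly as one derives the mean hitting-time equation $h_{ij}=1+\sum_k p_{ik}h_{kj}$, but now tracking the second moment instead of the first. Let me write $H_{ij}$ for the random number of steps to reach $j$ from $i$. Conditioning on the first move of the token from $i$ to a neighbor $k$ (which happens with probability $p_{ik}$), the walk uses one step to get to $k$ and then, by the Markov property, requires an independent copy of $H_{kj}$ further steps. Thus, conditionally on the first step going to $k$, we have the distributional identity $H_{ij} = 1 + H_{kj}$, where $H_{kj}$ is the hitting time from the new position. The key object to track is the second moment $\mathbb E[H_{ij}^2]$, since the variance will follow from $V[H_{ij}]=\mathbb E[H_{ij}^2]-h_{ij}^2$.

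First I would write the second moment by the law of total expectation over the first step:
\begin{equation*}
\mathbb E[H_{ij}^2]=\sum_{k\in\mathcal N(i)}p_{ik}\,\mathbb E[(1+H_{kj})^2]=\sum_{k\in\mathcal N(i)}p_{ik}\left(1+2h_{kj}+\mathbb E[H_{kj}^2]\right).
\end{equation*}
Next I would substitute $\mathbb E[H_{kj}^2]=V[H_{kj}]+h_{kj}^2$ on the right-hand side and $\mathbb E[H_{ij}^2]=V[H_{ij}]+h_{ij}^2$ on the left-hand side, which turns the moment recurrence into a recurrence directly in the variances:
\begin{equation*}
V[H_{ij}]+h_{ij}^2=\sum_{k\in\mathcal N(i)}p_{ik}\left(V[H_{kj}]+h_{kj}^2+2h_{kj}+1\right)=\sum_{k\in\mathcal N(i)}p_{ik}\left(V[H_{kj}]+(h_{kj}+1)^2\right),
\end{equation*}
where the last step merely completes the square $h_{kj}^2+2h_{kj}+1=(h_{kj}+1)^2$. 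This is precisely equation~\eqref{variance}, so the proof closes.

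The one subtlety worth handling carefully is the case $k=j$ in the sum: when the first step lands directly on the target $j$, the remaining hitting time is $H_{jj}$ in the sense of ``already arrived,'' i.e. zero additional steps, so the contribution is $p_{ij}\cdot 1$ rather than involving a genuine return time $h_{jj}$. I would note that this is consistent with adopting the convention $h_{jj}=0$ and $V[H_{jj}]=0$ inside the recurrence (the target state is treated as absorbing for the purpose of $H_{\cdot j}$), so that the term $(h_{jj}+1)^2=1$ correctly accounts for the single step needed to reach $j$. With this convention the identity $H_{ij}=1+H_{kj}$ holds uniformly over all first moves, including $k=j$, and no separate boundary case is needed.

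The main obstacle is not computational but conceptual: one must justify that, conditioned on the first step to $k$, the subsequent number of steps to reach $j$ is an \emph{independent} copy of $H_{kj}$ with the same distribution, and in particular that $\mathbb E[(1+H_{kj})^2]=1+2h_{kj}+\mathbb E[H_{kj}^2]$ uses the unconditional moments of $H_{kj}$. This is exactly the strong Markov property applied at the first jump, and the same argument that underlies the linear recurrence for $h_{ij}$ already assumed in the paper; I would invoke it explicitly and then the remainder is the algebraic expansion above.
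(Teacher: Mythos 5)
Your proof is correct and follows essentially the same route as the paper's: a first-step decomposition of the second moment $\mathbb{E}[H_{ij}^2]$, followed by the substitution $\mathbb{E}[H^2]=V[H]+h^2$ and completing the square. The only difference is presentational: the paper implements the conditioning by summing explicitly over paths $c\in\mathcal C_{i\rightarrow j}$ and factoring $p(c)=p_{c_0c_1}p(c_1c_2\ldots)$, whereas you invoke the (strong) Markov property and the law of total expectation directly, and you make explicit the $k=j$ boundary convention that the paper leaves implicit.
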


\begin{proof}
$h_{ij}$ is the average length of the path a random walk starting from $i$ takes until it reaches $j$. Thus, since the probability that the random walk reaches $j$ is 1, the probability of an infinite random path not reaching $j$ is 0, and $$h_{ij}=\sum_{c\in\mathcal C_{i\rightarrow j}}p(c)l(c)$$
with $\mathcal C_{i\rightarrow j}$ the set of all paths from $i$ to $j$, $p(c)$, the probability that a random walk follows the path $c$ ($p(c)=\prod p_{c_kc_{k+1}}$), and $l(c)$ the length of $c$.


\begin{equation*}
\begin{split}
V[H_{ij}]&=\sum_{c\in\mathcal C_{i\rightarrow j}}p(c)(l(c)-h_{ij})^2\\
&=\sum_{c\in\mathcal C_{i\rightarrow j}}p(c)l(c)^2-h_{ij}^2\text{ according to a well-known identity}
\\
&=\sum_{c\in\mathcal C_{i\rightarrow j}}p_{c_0c_1}p(c_1c_2\ldots)(l(c_1c_2\ldots)+1)^2-h_{ij}^2\\
&
=\sum_{k\in\mathcal N(i)}p_{ik}\sum_{c\in\mathcal C_{k\rightarrow j}}p(c)(l(c)^2+2l(c)+1)-h_{ij}^2\\
&=\sum_{k\in\mathcal N(i)}p_{ik}(V[H_{kj}]+h_{kj}^2+2h_{kj}+1)-h_{ij}^2\\
&=\sum_{k\in\mathcal N(i)}p_{ik}(V[H_{kj}]+(h_{kj}+1)^2)-h_{ij}^2
\end{split}
\end{equation*}
\end{proof}

The system (\ref{variance}) is linear, and depends on the hitting times. In \cite{BuSo07}, we have proposed an efficient algorithm to compute the hitting times, with one matrix inversion.
In order to solve the system and obtain the variances, we have to compute the inverse of a matrix.

Let $M(j)$ the matrix defined by:\begin{itemize}
\item $M_{il}(j)=p_{il}=\frac{1}{\deg(i)}$ if $i\neq l$ and $i\neq j$ ;
\item $M_{ii}(j)=-1$ if $i\neq j$ ;
\item $M_{ji}(j)=0$ if $i\neq j$ ;
\item $M_{jj}(j)=1$.
\end{itemize}

Let $v(j)$ a vector defined by $v_i(j)=h_{ij}^2-\sum_{k\in\mathcal N(i)}p_{ik}(h_{kj}+1)^2$ for $i\neq j$ and $v_j(j)=0$, thus Lemma \ref{lemmevariance} can be rewritten: $$M(j)V[H_{.j}]=v(j)$$

$M(j)$ being inversible, we can compute the variances by finding its inverse.



From Lemma \ref{Chebi}, we have
\begin{corollary}\label{cor2}
Given a time $t$:
\begin{equation}\label{eq21}
P[H_{ii} < t]\geq1-\frac{V[H_{ii}]}{(t-h_{ii})^2}
\end{equation}

Given a probability $\varepsilon$:
\begin{equation}\label{eq22}
P\left[H_{ii} < h_{ii}+\frac{\sigma [H_{ij}]}{\sqrt{\varepsilon}}\right]\geq1-\varepsilon
\end{equation}
\end{corollary}

Expression (\ref{eq21}), provide a bound on the probability that the token has come back before a given time $t$. With (\ref{eq22}), we can have a time after which we are sure at a given confidence level $1-\epsilon$ that the token has come back.
\begin{figure}[H]
\begin{center}
\includegraphics[width=.3\linewidth, angle=270]{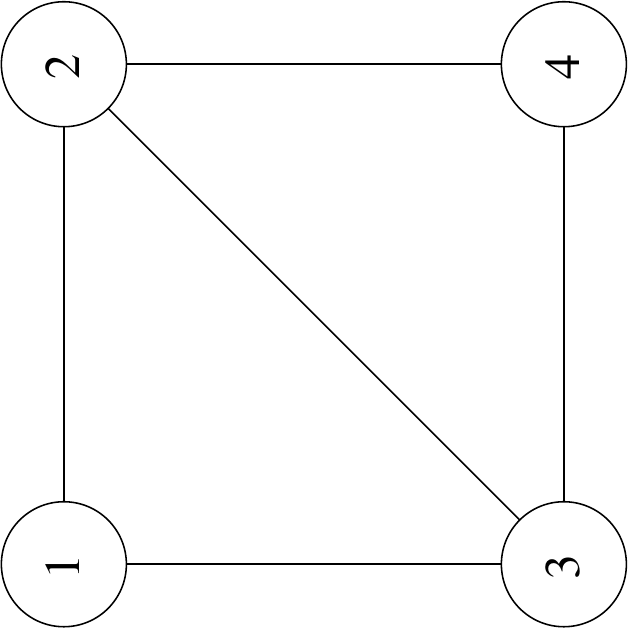}
\caption{Graph example $G$}\label{graph}
\end{center}
\end{figure}

To illustrate the meaning of the previous corollaries, consider the above graph. The return time is 5 for node 1. 

We use the corollary \ref{cor2} to obtain a good value for timeout. The variance of $H_{11}$ in the previous example is 51. Thus, for $t=50$, the probability that the node 1 waits less than 50 steps to receive the token after having released it is more than $1-\frac{V[H_{11}]}{t-h_{11}}=1-\frac{51}{45^2}\sim 97,5\%$. To be 99\% sure that the token has returned to 1, we will have to wait $h_{11}+10\times\sigma [H_{11}]\leq77$ steps.

\subsection{On timeout for deadlock communication}

We take into account possible transient failures which may remove the token from the network. In this subsection, we give a mean for the nodes to detect at any confidence level the loss of the token. 
We provide a way to choose the  best timeout value.

The longer a node has been waiting for the token, the more likely the token has disappeared. The suspicion that the token is lost increases with the time elapsed since it has seen the token for the last time. A node will have to check if the token has disappeared and then create a new token if necessary. 

We model the possibility that the token disappears by introducing a probability $p$ that the token disappears at each step: if the token exists at time $t$, at time $t+1$, the probability that it has disappeared is $p$ and the probability that it still exists is $1-p$.

\subsubsection{Measuring the probability that the token is lost}
We denote $L_t$ the event \emph{``at time $t$, the token is lost''}. We know, when the token cannot be lost, the probability that it comes back before a given time knowing that it still exists. We now want to compute the probability $P[L_t|H_{ii}>t]$ that the token is lost knowing that a node has not seen it in a given time.

\begin{theorem}
The probability that the token is lost, knowing that node $i$ has not seen it in the last $t$ steps, is:
$$P[L_t|H_{ii}>t]\geq1-\frac{V[H_{ii}]2^{t+1}(1-p)^{t+1}(1+p)}{2(1-p)^{t+1}t^2+pt^22^{t+1}}$$
where $V[H_{ii}]$ is the variance of $H_{ii}$, the number of steps before returning to  $i$ for the first time.\end{theorem}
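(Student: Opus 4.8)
The goal is to compute $P[L_t|H_{ii}>t]$ from below. My plan is to apply Bayes' rule to re-express this conditional probability in terms of quantities I can control, then bound each piece using the loss model and Chebyshev's inequality (Corollary \ref{cor2}).

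\medskip

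\noindent\textbf{Setting up via Bayes.} I would write
\begin{equation*}
P[L_t\mid H_{ii}>t]=\frac{P[H_{ii}>t\mid L_t]\,P[L_t]}{P[H_{ii}>t]}.
\end{equation*}
The denominator splits over whether the token is lost or not: $P[H_{ii}>t]=P[H_{ii}>t\mid L_t]P[L_t]+P[H_{ii}>t\mid \overline{L_t}]P[\overline{L_t}]$. If the token is lost it certainly never comes back, so $P[H_{ii}>t\mid L_t]=1$; this is the key simplification that lets me collapse the first factor. The loss model gives $P[\overline{L_t}]=(1-p)^{t}$ (the token survives each of the $t$ steps with probability $1-p$), hence $P[L_t]=1-(1-p)^{t}$. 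Substituting, the ratio becomes $\frac{P[L_t]}{P[L_t]+P[H_{ii}>t\mid\overline{L_t}](1-p)^t}$, and I would rewrite the target lower bound as an upper bound on the complementary term $P[\overline{L_t}\mid H_{ii}>t]$, which is cleaner to estimate.

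\medskip

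\noindent\textbf{Bounding the conditional hitting probability.} The remaining ingredient is $P[H_{ii}>t\mid\overline{L_t}]$, the probability the token has not returned given it still exists. This is exactly the regime where Corollary \ref{cor2} applies, since conditioning on survival restores the usual random-walk return-time statistics. From $P[H_{ii}<t]\geq 1-\frac{V[H_{ii}]}{(t-h_{ii})^2}$ I get $P[H_{ii}>t\mid\overline{L_t}]\leq \frac{V[H_{ii}]}{(t-h_{ii})^2}$, and I would simplify the bound by dropping $h_{ii}$ to replace $(t-h_{ii})^2$ by $t^2$ (legitimate as a weakening of the bound, giving the $t^2$ that appears in the claimed expression). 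Plugging this upper bound into the denominator of the complementary ratio and carefully tracking the direction of each inequality yields an upper bound on $P[\overline{L_t}\mid H_{ii}>t]$, i.e.\ a lower bound on $P[L_t\mid H_{ii}>t]$. Finally I would rearrange algebraically, clearing the factors $(1-p)^t$ and $2^{t+1}$, to match the stated closed form $1-\frac{V[H_{ii}]2^{t+1}(1-p)^{t+1}(1+p)}{2(1-p)^{t+1}t^2+pt^22^{t+1}}$.

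\medskip

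\noindent\textbf{Main obstacle.} The conceptual steps (Bayes, the survival probability, Chebyshev) are straightforward; the genuine difficulty is bookkeeping the inequality directions through the Bayes quotient, since lower-bounding a fraction whose denominator itself contains an estimated quantity requires that I upper-bound the denominator's variable term — and I must confirm the substitution $1-(1-p)^t \geq$ (something) lines up with the factor $(1+p)$ and the exact powers $t+1$ versus $t$ in the final expression. Matching the precise algebraic form will require reconciling $1-(1-p)^t$ with the $p\cdot 2^{t+1}$ term, likely via the identity relating $1-(1-p)^t$ to $p$ times a partial geometric sum or a one-step-unfolding bound; pinning down which elementary inequality on $1-(1-p)^t$ produces exactly these constants is where I expect to spend the most care.
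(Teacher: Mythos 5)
Your overall strategy (Bayes' rule, the $(1-p)$-per-step survival model, Chebyshev applied to the conditional return time) is the same as the paper's, but there is a genuine gap at the step you call the key simplification, and it propagates into the final algebra. The two facts you use, $P[H_{ii}>t\mid L_t]=1$ and $P[L_t]=1-(1-p)^t$, cannot hold simultaneously. If the loss clock runs independently of the walk (which is what gives $P[L_t]=1-(1-p)^t$), then $L_t$ contains histories in which the token visited $i$ \emph{before} disappearing, and on those histories $H_{ii}\leq t$, so $P[H_{ii}>t\mid L_t]<1$. Conversely, if you define $L_t$ as ``lost before returning to $i$'' (which is what makes $P[H_{ii}>t\mid L_t]=1$ true), then $P[L_t]$ is not $1-(1-p)^t$ but $\sum_{k}p(1-p)^{k-1}P[\text{the walk avoids } i \text{ for } k \text{ steps}]$. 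Since this quantity sits in the denominator of the ratio $P[\neg L_t\mid H_{ii}>t]$ that you want to upper-bound, you need a \emph{lower} bound on it, hence a lower bound on the probability that the walk avoids $i$ for $k$ steps. This is exactly the ingredient your proposal lacks and the paper supplies: it exhibits explicit trajectories (the token bouncing back and forth across one edge, probability at least $1/\min\{\deg\}^k$, crudely replaced by $1/2^k$) to get $P[H_{ii}>t]\geq(1-p)^t\frac1{2^t}+p\sum_{k=1}^{t-1}(1-p)^k\frac1{2^k}$. Using $1-(1-p)^t$ instead overestimates the mass of $L_t\cap\{H_{ii}>t\}$, inflates the denominator, and so the claimed lower bound on $P[L_t\mid H_{ii}>t]$ does not follow.

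This missing path argument is also why your plan to ``rearrange algebraically to match the stated closed form'' cannot succeed: the factors $2^{t+1}$ and the geometric sum $p\,\frac{1-(1-p)^t/2^t}{1-(1-p)/2}$ in the theorem's bound originate precisely from those $1/2^k$ bouncing probabilities, whereas your denominator $1-(1-p)^t+\frac{V[H_{ii}]}{t^2}(1-p)^t$ contains no power of $2$, and no manipulation of $1-(1-p)^t$ will create one. Two smaller remarks: the part of your argument that is sound --- the monotonicity of $x\mapsto\frac{x(1-p)^t}{x(1-p)^t+B}$, which lets you plug the Chebyshev upper bound into both numerator and denominator --- is fine, but it produces a structurally different estimate, not the stated one; and replacing $(t-h_{ii})^2$ by $t^2$ is a \emph{strengthening} of the Chebyshev bound, not a weakening (since $t^2>(t-h_{ii})^2$), so it is not ``legitimate'' as you claim, although the paper commits the same abuse at the step $P[H_{ii}>t\mid\neg L_t]\leq\frac{V[H_{ii}]}{t^2}$.
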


\begin{proof}
Using the Bayes theorem, we obtain:
\begin{equation*}\begin{split}
P[L_t|H_{ii}>t]&=1-\frac{P[\neg L_t\cap\{H_{ii}>t\}]}{P[H_{ii}>t]}\\
&=1-\frac{P[\neg L_t\cap\{H_{ii}>t\}]P[\neg L_t]}{P[H_{ii}>t]P[\neg L_t]}\\
&=1-\frac{P[H_{ii}>t|\neg L_t]P[\neg L_t]}{P[H_{ii}>t]}
\end{split}\end{equation*}

According to the previous section, $P[H_{ii}>t|\neg L_t]\leq\frac{V[H_{ii}]}{t^2}$.

$\frac{1}{\min\{\deg\}^t} $ is a bound on the probability that the token  goes forth and back between two nodes during $t$ steps. 

We also have: 
$P[H_{ii}>t]\geq(1-p)^t\frac1{\min\{\deg\}^t}+p\sum_{k=1}^{t-1}(1-p)^k\frac1{\min\{\deg\}^k}=(1-p)^t\frac1{\min\{\deg\}^t}+p\frac{1-(1-p)^t\frac1{2^t}}{1-(1-p)\frac1{2}}\geq(1-p)^t\frac1{2^t}+p\frac{1-(1-p)^t\frac1{2^t}}{1-(1-p)\frac1{2}}$.

Thus,
\begin{small}
\begin{equation*}\begin{split}
P[L_t|H_{ii}>t]&\geq1-\frac{\frac{V[H_{ii}]}{t^2}(1-p)^{t+1}}{(1-p)^t\frac1{2^t}+p\frac{1-(1-p)^t\frac1{2^t}}{1-(1-p)\frac1{2}}}\\
&\geq1-\frac{\frac{V[H_{ii}]}{t^2}(1-p)^{t+1}(1-\frac{1-p}2)}{(1-p)^t\frac1{2^t}-(1-p)^{t+1}\frac1{2^{t+1}}+p(1-(1-p)^t\frac1{2^t})}\\
&\geq1-\frac{\frac{V[H_{ii}]}{t^2}(1-p)^{t+1}(1-\frac{1-p}2)}{(1-p)^{t+1}\frac1{2^t}-(1-p)^{t+1}\frac1{2^{t+1}}+p}\\
&\geq1-\frac{\frac{V[H_{ii}]}{t^2}(1-p)^{t+1}(1-\frac{1-p}2)}{(1-p)^{t+1}\frac1{2^{t+1}}+p}\\
&\geq1-\frac{V[H_{ii}]2^{t+1}(1-p)^{t+1}(1-\frac{1-p}2)}{(1-p)^{t+1}t^2+pt^22^{t+1}}\\
\end{split}\end{equation*}\end{small}
\end{proof}

\begin{figure}[H]
\begin{center}
\includegraphics[width=.8\linewidth]{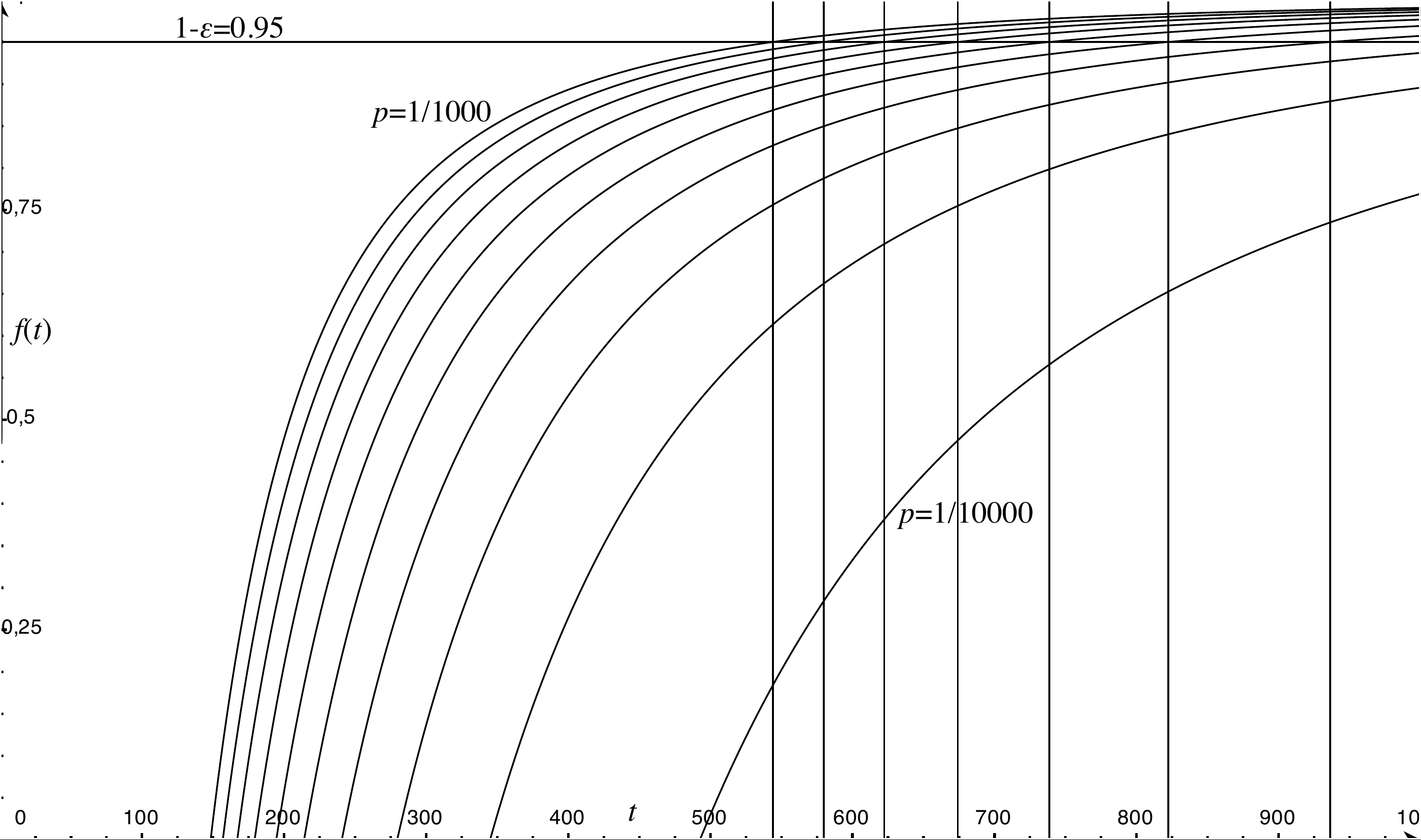}
\end{center}
\caption{Probability that the token is lost on graph $G$ according the elapsed time on a node and a probability $p$ that the token is lost during a step}
\end{figure}
The figure above represents the graph $\varepsilon=f(t)=\frac{V[H_{ii}]2^{t+1}(1-p)^{t+1}(1-\frac{1-p}2)}{(1-p)^{t+1}t^2+pt^22^{t+1}}$, with $V[H_{ii}]=51$,  and $p=0.1, 0.2,\ldots, 1$. To be $95\%$ sure that the token is lost, we look for the intersection of the curve with $1-\varepsilon =0.95$. If $p=0.1$, we can see that we will have to wait for 23 steps, if $p=0.5$, 38 steps, and if $p=0.9$, 75 steps.

\subsubsection{Choosing timeout values}
Choosing $t$ so that $$\frac{V[H_{ii}]2^{t+1}(1-p)^{t+1}(1-\frac{1-p}2)}{(1-p)^{t+1}t^2+pt^22^{t+1}}\leq\varepsilon$$ provides a time after which, if a node has not seen the token, the probability that it has disappeared is greater than $1-\varepsilon$.

\begin{theorem}[Timeout value]
Choosing a timeout greater than
$$\frac{\log \frac{V[H_{ii}]}{h_{ii}^2}+\log(\frac{1-p^2}{2p})-\log\varepsilon+2}{-\log(1-p)}$$
ensures that the token is lost with probability $1-\varepsilon$.
\end{theorem}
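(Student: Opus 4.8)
The plan is to turn the lower bound on $P[L_t\mid H_{ii}>t]$ obtained in the preceding theorem into an explicit threshold on the timeout $t$. Writing
$$f(t)=\frac{V[H_{ii}]\,2^{t+1}(1-p)^{t+1}\left(1-\frac{1-p}{2}\right)}{(1-p)^{t+1}t^2+p\,t^2\,2^{t+1}},$$
that theorem gives $P[L_t\mid H_{ii}>t]\ge 1-f(t)$. Hence it suffices to exhibit a $t$ for which $f(t)\le\varepsilon$: for any such $t$, a node that has not seen the token for $t$ steps concludes that the token is lost with probability at least $1-\varepsilon$, which is exactly the claim. So the whole proof reduces to solving the inequality $f(t)\le\varepsilon$ for $t$.

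First I would simplify $f(t)$. Using $1-\frac{1-p}{2}=\frac{1+p}{2}$ the numerator becomes $V[H_{ii}]\,2^{t}(1-p)^{t+1}(1+p)$, and factoring $t^2$ out of the denominator gives $t^2\big((1-p)^{t+1}+p\,2^{t+1}\big)$. Dropping the nonnegative term $(1-p)^{t+1}$ from the denominator can only increase the fraction, so
$$f(t)\le\frac{V[H_{ii}]\,2^{t}(1-p)^{t+1}(1+p)}{p\,t^2\,2^{t+1}}=\frac{V[H_{ii}]\,(1-p^2)}{2p}\cdot\frac{(1-p)^{t}}{t^2},$$
where I used $(1-p)(1+p)=1-p^2$ and cancelled the powers of $2$. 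This isolates a clean exponential factor $(1-p)^t$ multiplied by the nuisance polynomial factor $1/t^2$.

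The next step is to neutralise $1/t^2$. Since the timeout is in any case chosen larger than the return time $h_{ii}$, we have $t\ge h_{ii}$ and hence $1/t^2\le 1/h_{ii}^2$, giving $f(t)\le \frac{V[H_{ii}](1-p^2)}{2p\,h_{ii}^2}(1-p)^t$. It then remains to require $\frac{V[H_{ii}](1-p^2)}{2p\,h_{ii}^2}(1-p)^t\le\varepsilon$; taking logarithms and dividing by $-\log(1-p)>0$ (which reverses the inequality) yields
$$t\ge\frac{\log\frac{V[H_{ii}]}{h_{ii}^2}+\log\frac{1-p^2}{2p}-\log\varepsilon}{-\log(1-p)}.$$
The stated threshold is this bound padded by $2$ in the numerator; the additive $2$ is the slack I would use to absorb both the crude replacement $t^2\ge h_{ii}^2$ and the requirement $t\ge h_{ii}$ itself, so that the chosen $t$ simultaneously exceeds $h_{ii}$ and forces $f(t)\le\varepsilon$.

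I expect the main obstacle to be exactly this polynomial factor $1/t^2$: the inequality $f(t)\le\varepsilon$ is genuinely transcendental, mixing $(1-p)^t$ with $t^2$, so no exact closed form exists, and the art is to bound $1/t^2$ by a constant ($1/h_{ii}^2$) cheaply while keeping the resulting sufficient condition tight enough to be useful. Everything else --- the algebraic reduction of $f$ and the logarithmic inversion --- is routine.
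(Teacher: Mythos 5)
Your proof is correct and follows essentially the same route as the paper's: drop the $(1-p)^{t+1}$ term (equivalently, the paper's $t^2/2^{t+1}$ term after inverting) to obtain a sufficient condition, freeze the polynomial factor via $t\geq h_{ii}$, and solve the remaining exponential inequality by logarithms. If anything, your execution is cleaner than the paper's own (which misplaces a factor of $(1-p)$ and stops at an implicit inequality), and it recovers exactly the stated constant $\log\frac{1-p^2}{2p}$ with the $+2$ as slack, just as you surmise.
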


\begin{proof}

\begin{equation*}
\begin{split}
&\frac{V[H_{ii}]2^{t+1}(1-p)^{t+1}(1-\frac{1-p}2)}{(1-p)^{t+1}t^2+pt^22^{t+1}}\leq\varepsilon\\
\Leftrightarrow &\frac{(1-p)^{t+1}t^2+pt^22^{t+1}}{2^{t+1}(1-p)^{t+1}}\geq\frac{V[H_{ii}](1-\frac{1-p}2)}\varepsilon\\
\Leftrightarrow&\frac{t^2}{2^{t+1}}+\frac{pt^2}{(1-p)^{t+1}}\geq\frac{V[H_{ii}](1-\frac{1-p}2)}\varepsilon\\
\Leftrightarrow &\frac{pt^2}{(1-p)^{t+1}}\geq\frac{V[H_{ii}](1-\frac{1-p}2)}\varepsilon\\
\Leftrightarrow&\frac{t^2}{(1-p)^t}\geq\frac{V[H_{ii}](p+1)}{2\varepsilon p(1-p)}\\
\Leftrightarrow &2\log t-t\log(1-p)\geq\log \left(\frac{V[H_{ii}](p+1)}{2\varepsilon p(1-p)}\right)
\end{split}
\end{equation*}

When focusing only on $t\geq h_{ii}$: $\log t\geq\log h_{ii}$ and if $t$ is such that 
$$2\log h_{ii}-t\log(1-p)\geq\log C$$
then the probability that the token is lost is less than $1-\varepsilon$.

\end{proof}

In the above example, with $p=0.1$ and $\varepsilon=1\%$, we have to set the timeout to 33. With $\varepsilon=10\%$, the timeout is to be set at $23$.

\section{Conclusion}
We have proposed a self-stabilizing token circulation algorithm with no assumption on the topology of the distributed system. This algorithms can manage all events related to mobility, most of them without even requiring any convergence. The (average) convergence time is computed, and the trade-off between the number of messages and the convergence time is explained.

We now plan on working on the scalability of such solutions, with a quantitative assessment of the dynamicity of the considered systems.



\bibliographystyle{alpha}
\bibliography{biblio}

\end{document}